\newcommand{\ob}{\,\mbox{\bf\texttt{[}}\,}
\newcommand{\cb}{\,\mbox{\bf\texttt{]}}\,}
\newcommand{\op}{\,\mbox{\bf\texttt{(}}\,}
\newcommand{\cp}{\,\mbox{\bf\texttt{)}}\,}
\newtheorem{theorem}{Theorem}
\newtheorem{lemma}[theorem]{Lemma}
\newtheorem{corollary}[theorem]{Corollary}
\newcommand{\ZZ}{{{\widehat{Z}}}\xspace}
\chardef\other=12
\def\mmakeactive#1{\catcode`#1=\active\ignorespaces}
\gdef\obeywhitespace{%
  \mmakeactive\^^M %
  \let^^M=\NewLine %
  \aftergroup\removebox %
  \obeyspaces %
}}
\def\NewLine{\par\indent}
\def\removebox{\setbox0=\lastbox}
\def\|{|}
\title{On the scale-free nature of RNA secondary structure networks}
\author{P. Clote}
\date{Biology Department, Boston College, Chestnut Hill, MA 02467}
\begin{document}
\maketitle

\abstract{A network is {\em scale-free} if its connectivity density function
is proportional to
a power-law distribution.  Scale-free networks may provide an explanation for
the robustness observed in certain physical and biological phenomena, since the
presence of a few highly connected {\em hub} nodes and a large number of 
small-degree nodes may provide alternate paths between any two nodes on 
average -- such robustness has been suggested in studies of
metabolic networks, gene interaction networks and protein folding. A theoretical
justification for why biological networks are often found to be scale-free may lie
in the well-known fact that expanding networks in which new nodes are preferentially 
attached to highly connected nodes tend to be scale-free.
In this paper, we provide the first efficient algorithm to compute the
connectivity density function for the ensemble of all secondary structures of
a user-specified length, and show both by computational and theoretical arguments
that preferential attachment holds when expanding the network from length $n$ to
length $n+1$ structures.
Since existent power-law fitting software, such as {\tt powerlaw}, cannot
be used to determine a power-law fit for our exponentially large RNA connectivity data,
we also implement efficient code to compute the maximum
likelihood estimate for the power-law scaling factor and associated
Kolmogorov-Smirnov $p$-value.
Statistical goodness-of-fit tests indicate that one must reject the hypothesis that
RNA connectivity data follows a power-law distribution. Nevertheless, the power-law
fit is visually a good approximation for the tail of connectivity data,
and provides a rationale for investigation of preferential
attachment in the context of macromolecular folding.
}

\section{Introduction}
\label{section:intro}

The {\em connectivity} (or {\em degree}) of a node $v$ in a network 
(or undirected graph) is the number of nodes (or neighbors) of $s$,
connected to $s$ by an edge.
A network is said to be {\em scale-free} if
its connectivity function $N(k)$, which  represents the
number of nodes having degree $k$, satisfies the property that
$N(a\cdot k) = b\cdot N(x)$, the unique solution of which is
a {\em power-law} distribution, which by definition satisfies
$N(k) \propto {k^{-\alpha}}$ for some scaling factor $\alpha>1$  
\cite{newmanSIAMreview}. Scale-free
networks contain a few nodes of high degree and a large number of 
nodes of small degree, hence may provide a reasonable model to explain
the robustness often manifested in biological networks -- such robustness
must, of course, be present for life to exist.

Barab{\'a}si and Albert \cite{Barabasi.s99} analyzed the emergence of
scaling in random networks, and showed that two properties, previously 
not considered in graph theory, were responsible for the power-law 
scaling observed in real networks: (1) networks are not static, but grow
over time, (2) during network growth, a highly connected node tends to 
acquire even more connections -- the latter concept is known as
{\em preferential attachment}.
In \cite{Barabasi.s99}, it was argued that
preferential attachment of new nodes implies that the degree
$N(k)$ with which a node in the network interacts with $k$ other nodes decays 
as a power-law, following $N(k) \propto k^{-\alpha}$, for 
$\alpha>1$.  This argument provides a plausible explanation for why
diverse biological and physical networks appear to be scale-free.
Indeed, various publications have suggested that the the following
biological networks are scale-free: protein-protein
interaction networks \cite{Ito.pnas00,schwikowskiProteinProteinNetworks},
metabolic networks \cite{Ma.b03}, 
gene interaction networks \cite{Tong.s04}, 
yeast co-expression networks \cite{VanNoort.er04},
and protein folding networks \cite{Bowman.pnas10}.

\subsubsection*{How scale-free are biological networks?}

The validity of a power-law fit for previously studied
biological networks was first called into question in \cite{Khanin.jcb06}, 
where 10 published data sets of biological interaction networks were shown
{\em not} to be fit by a power-law distribution, despite published claims 
to the contrary. Estimating an optimal power-law scaling factor by maximum
likelihood and using $\chi^2$ goodness-of-fit tests, it was shown in
\cite{Khanin.jcb06} that not a single interaction network from
had a nonzero probability of being drawn from a power-law distribution;
nevertheless, some of the interaction networks could be fit by a
{\em truncated} power-law distribution.
The data analyzed by the authors included data from protein-protein 
interaction networks \cite{Ito.pnas00,schwikowskiProteinProteinNetworks},
gene interaction networks determined by synthetic lethal interactions
\cite{Tong.s04}, metabolic interaction networks \cite{Ma.b03}, etc. 

In \cite{newmanPowerLawFitEmpiricalData}, 24 real-world data sets were
analyzed from a variety of disciplines, each of which had been conjectured 
to follow a power-law
distribution. Estimating an optimal power-law scaling factor by maximum
likelihood and using goodness-of-fit tests based on likelihood ratios and
on the Kolmogorov-Smirnov statistic for non-normal data, it was shown in
\cite{newmanPowerLawFitEmpiricalData} that some of the conjectured
power-law distributions were consistent with claims in the literature, while
others were not. For instance, 
Clauset et al.  \cite{newmanPowerLawFitEmpiricalData} found 
sufficient statistical evidence to reject claims of scale-free behavior for
earthquake intensity and metabolic degree networks, while there was 
insufficient evidence to reject such claims for
networks of protein interaction, Internet, and species per genus.

It is possible to come to opposite conclusions, depending on whether $\chi^2$ 
or Kolmogorov-Smirnov (KS) statistics are used to test the hypothesis whether
a network is scale-free, i.e. follows a (possibly truncated) power-law
distribution.  Indeed, Khanin and Wit \cite{Khanin.jcb06} obtained a $p$-value 
of $<10^{-4}$ for $\chi^2$ goodness-of-fit for a truncated power-law
distribution for the protein-protein interaction data from
\cite{Ito.pnas00}, while
Clauset et al.  \cite{newmanPowerLawFitEmpiricalData} obtained a
$p$-value of $0.31$ for KS goodness-of-fit for a truncated power-law
for the same data. This example provides the occasion for us to explain the position
taken in this paper that (in our opinion) it is quite possible for a statistical
test to lead to the rejection of goodness-of-fit of the power-law distribution for
physical data arising from biological networks, yet the (approximate) power-law
fit can possibly provide valuable insight into the nature of the data. In this
manner, we sidestep the current polemic concerning the question of how wide-spread 
scale-free networks really are. In their preprint from Jan. 9, 2018,
entitled ``Scale-free networks are rare'', Broido and Clauset 
\cite{clausetArxiv2018} argue that less than
45 of the 927 real-world network data sets (i.e. $4\%$) found in the
{\em Index of Complex Networks} exhibit the
``strongest level of direct evidence for scale-free structure''. In
a response statement dated March 6, 2018, A.L. Barab{\'a}si argued against
the conclusions of Broido and Clauset -- indeed, the title of 
Barab{\'a}si's statement sums up his position:
``Love is All You Need:
{C}lauset's fruitless search for scale-free networks''.\footnote{It is not
the first time a polemic has arisen concerning the
power-law distribution -- indeed, there was a heated exchange
between Mandelbrot and Simon almost 60 years ago in the journal 
{\em Information and Control}. For details, references, and a history
of the power-law distribution, see see \cite{mitzenmacher}.}

Regardless of the Barab{\'a}si-Clauset polemic,
we stress that prior to the introduction of our novel secondary structure
connectivity algorithm, only fragmentary results were possible 
by exhaustively enumerating all secondary structures having free energy
within a certain range obove the minimum free energy \cite{Wuchty.nar03}.
Indeed, using our methods, for the first time we can address the question of
whether RNA secondary structure connectivity is scale-free.
Moreover, to the best of our knowledge, this is the first time that any computational
or theoretical evidence has been given to show that preferential attachment
exists for the network of RNA secondary structures. 

The current paper investigates properties of the ensemble of RNA secondary structures,
considered as a network, and so extends results of \cite{Clote.jcc15},
which described a cubic time dynamic programming algorithm to compute the expected network
degree. The RNA connectivity algorithm described in 
Section~\ref{section:fasterAlgoHomopolymer} is completely unrelated from  that of
\cite{Clote.jcc15}, yet allows one to compute all finite moments,
including mean, variance, skew, etc.

The plan of the remaining paper is as follows.
Section~\ref{section:methods} presents a brief
summary of basic definitions, followed by the recursions for
an efficient dynamic programming algorithm
to determine the absolute [resp. relative] frequencies
$N(k)$ [resp. $p(k)$ for secondary structure connectivity
of a given homopolymer, which allows non-canonical base pairs. 
Though not done in this paper, this algorithm could be extended
to the case of (real) RNA sequences allowing only Watson-Crick
and wobble base pairs.  Section~\ref{section:statisticalMethods}
presents the statistical methods used to both fit RNA connnectivity
data to a power-law distribution and to perform a goodness-of-fit
test using Kolmogorov-Smirnov distance.
Section~\ref{section:results} presents results on power-law
fits of RNA connectivity data, and computational evidence that
preferential attachment holds for RNA secondary structure networks.
Section~\ref{section:conclusion} presents concluding remarks, and the Appendix
presents a mathematical proof of preferential attachment in the case of
a simplified model of secondary structure.

\section{Computing degree frequency}
\label{section:methods}

Section~\ref{section:preliminaries} presents basic definitions and notation used;
Section~\ref{section:degreeDensityAlgorithm} presents an algorithm to compute the
frequency of each degree less than $K$ in the ensemble of all secondary structures
with run time $O(K^2 n^4)$ and memory requirements $O(K n^3)$. 
Section~\ref{section:fasterAlgoHomopolymer} presents a more efficient algorithm,
with run time $O(K^2 n^3)$ and memory requirements $O(K n^2)$, for the special case
of a homopolymer, in which all possible non-canonical base pairs are permitted.
We implemented both algorithms in Python, cross-checked for identical results,
and call the resulting code {\tt RNAdensity}.
Since this paper is a theoretical contribution on network properties, we focus only
on homopolymers and do not present the details necessary to extend the algorithm of
Section~\ref{section:degreeDensityAlgorithm} to non-homopolymer RNA, where base pairs
are required to be Watson-Crick or GU wobble pairs.

\subsection{Preliminaries}
\label{section:preliminaries}

A secondary structure for a length $n$ {\em homopolymer} is a set $s$ of
base pairs $(i,j)$, such that (1) there exist at least $\theta$ unpaired
bases in every hairpin, where $\theta$ is usually taken to be $3$, though
sometimes $1$ in the literature, (2) there do not exist base pairs
$(i,j), (k,\ell) \in s$, such that $i<k<j<\ell$; i.e. a secondary structure
is an outerplanar graph, where each base pair $(i,j) \in s$ satisfies
$j-i>\theta$. The {\em free energy} of a homopolymer secondary structure $s$
is defined to be $-1$ times the number $|s|$ of base pairs in $s$ 
(Nussinov-Jacobson energy model \cite{nussinovJacobson}). Since
entropic effects are ignored, this is not a real free energy; however it
allows us to use the standard notation ``MFE'' for `minimum free energy'.
Note that the MFE structure for a length $n$ homopolymer has
$\lfloor \frac{n-\theta}{2} \rfloor$ many base pairs.

For a given RNA sequence, consider the exponentially large network of all
its secondary structures, where an undirected edge exists between any two
structures $s$ and $t$, whose base-pair distance equals one -- in other
words, for which $t$ is obtained from $s$ by either removing or adding 
one base pair. The connectivity (or degree) of a node, or structure, $s$ is 
defined to be the number of secondary structures obtained by deleting
or adding one base pair to $s$ -- this corresponds to the so-called
$MS_1$ move set \cite{flamm}.  At the end of the paper, we
briefly consider the $MS_2$ move set, where the degree of a structure $s$
is defined to be the number of secondary structures obtained by adding,
deleting or {\em shifting} one base pair \cite{amirShift}. The
$MS_1$ [resp. $MS_2$] connectivity of the MFE structure for a homopolymer
of length $n$ is $\lfloor \frac{n-\theta}{2} \rfloor$
[resp. $\lceil \frac{n-\theta}{2} \rceil$]. 
{\em Connectivity} $N(k)$ is defined to be the {\em absolute} frequency of 
degree $k$, i.e. the number of secondary structures having exactly $k$ 
neighbors, that can be obtained by either adding or removing a single
base pair.  The {\em degree density} $p(k)$ is defined to be the
probability density function (PDF) or
{\em relative} frequency of $k$, i.e. the proportion $p(k) = \frac{N(k)}{Z}$
of all secondary structures having $k$ neighbors, where $Z$ denotes the
total number of secondary structures for a given homopolymer.
A network is defined to be {\em scale-free},
provided its degree frequency $N(k)$ is proportional to a power-law, i.e.
$N(k) \propto k^{-\alpha}$ where $\alpha>1$ is the {\em scaling factor}.

\subsection{Computing the degree density}
\label{section:degreeDensityAlgorithm}

In this section, we describe a novel dynamic programming algorithm to 
compute the $MS_1$ {\em degree density} $p(k)$ for the network of secondary 
structures for a homopolymer of length $n$.  Note first that the empty structure
$s_{\emptyset}$ of length $n$ has 
\begin{align}
\label{eqn:degreeEmptyStr}
\mbox{degree}(s_{\emptyset}) &= \frac{(n-\theta)(n-\theta-1)}{2}
\end{align}
many neighbors, each obtained by adding a base pair. Indeed,
\begin{align*}
\mbox{degree}(s_{\emptyset}) &= 
\sum_{i=1}^{n-\theta-1} \sum_{j=i+\theta+1}^n 1 =
\sum_{i=1}^{n-\theta-1} [n-(i+\theta+1)+1] \\
&=
\sum_{i=1}^{n-\theta-1} (n-i-\theta)
= (n-\theta)(n-\theta-1) - \sum_{i=1}^{n-\theta-1} i =
\frac{(n-\theta)(n-\theta-1)}{2}
\end{align*}
Using a simple induction argument, equation~(\ref{eqn:degreeEmptyStr})
implies that for all values of $n$, the maximum possible degree,
$\mbox{maxDegree}(n)$, of a secondary
structure for the length $n$ homopolymer is
$frac{(n-\theta)(n-\theta-1)}{2}$

Let $Z^*(i,j,k)$ denote the
number of secondary structures on the interval $[i,j]$ that have exactly
$k$ neighbors with respect to the $MS_1$ move set (i.e. have degree $k$).
Let $N(i,j)$ denote the number of secondary structures on interval $[i,j]$,
computed by simple recursions from \cite{steinWaterman}
\begin{align}
\label{eqn:numStr1}
N(i,j) &= \left\{ \begin{array}{ll}
1 &\mbox{if $1\leq i\leq j \leq i+\theta \leq n$}\\
N(i,j-1) + N(i+1,j-1) + \sum_{r=i+1}^{j-\theta-1}   N(i,r-1) \cdot N(r+1,j-1)
&\mbox{if $i+\theta+1 \leq j \leq n$}\\
\end{array} \right.
\end{align}
or more simply
\begin{align}
\label{eqn:numStr2}
N(m) &= \left\{ \begin{array}{ll}
1 &\mbox{if $1 \leq m \leq \theta+1$}\\
N(m-1) + N(m-2) + \displaystyle\sum_{r=\theta}^{m-3}   N(m-r-2) \cdot N(r)
&\mbox{if $\theta+2 \leq m \leq n$}
\end{array} \right.
\end{align}
Although recursion equation (\ref{eqn:numStr1}) requires $O(n^3)$ time and 
$O(n^2)$ space, it can trivially be extended to compute the number of secondary
structures for an arbitary RNA sequence $a_1,\ldots,a_n$, where base pairs
are either Watson-Crick or wobble pairs. If no such extension is necessary,
then the recursion equation (\ref{eqn:numStr2}), first given in 
\cite{steinWaterman}, requires $O(n^2)$ time and $O(n)$ space, hence is
more efficient by a factor of $n$. In a similar fashion, the recursion
equations (\ref{eqn:baseA}-\ref{eqn:indD}) and pseudocode in
Section~\ref{section:degreeDensityAlgorithm} are given in a form that allows an
extension (not given here) to the general case of computing the 
degree density for
the ensemble of secondary structures of a given RNA sequence $a_1,\ldots,a_n$.
The resulting code ref{algo:degreeDensity} requires
$O(n^6)$ time and $O(n^4)$ storage, but this can be improved by a factor
of $n$.

Suppose that every hairpin loop is required to have at least 
$\theta \geq 1$ unpaired positions; i.e. if $(i,j)$ is a base pair,
then $i+\theta+1 \leq j$.  As in the 
recursions (\ref{eqn:baseA}-\ref{eqn:indD}), let
$Z(i,j,k,h,v)$ denote the number of secondary structures on 
the interval $[i,j]$, for $1 \leq i \leq j \leq n$ for the homopolymer model,
that have exactly $k$ neighbors, and for which there are exactly $h$ 
unpaired positions (or {\em holes}) in $[i,j-\theta-1]$ and the position 
$j-v$ is paired to $r \in [i,j-v-\theta-1]$, 
while positions $j-v,j-v+1,...,j$ are not base-paired to any position in 
$[i,j]$. Additionally, define 
\begin{align}
\label{eqn:Zstar}
Z^*(i,j,k) = \sum_{h=0}^{j-\theta-i} \sum_{v=0}^{\theta+1} Z(i,j,k,h,v)
\end{align}
Recalling from equation (\ref{eqn:degreeEmptyStr}) that
$\mbox{maxDegree}(n) = \frac{(n-\theta)(n-\theta-1)}{2}$,
for any $1 \leq i\leq j \leq n$, we clearly have that
\begin{align*}
N(i,j) &= \sum_{k=1}^{\mbox{\tiny{maxDegree(j-i+1)}}} Z^*(i,j,k) \\
 &= \sum_{k=1}^{\mbox{\tiny{maxDegree(j-i+1)}}}
 \sum_{h=0}^{j-\theta-i} \sum_{v=0}^{\theta+1} Z(i,j,k,h,v)
\end{align*}


The idea of our algorithm is to partition all secondary structure of the
interval $[i,j]$ into those structures having exactly degree $k$ 
($k$ $MS_1$ neighbors, i.e. $k$ structures that can be obtained by either
adding or removing a single base pair). To support an inductive argument,
in proceeding from interval $[i,j]$ to $[i,j+1]$, we need additionally to
determine the number of structures having degree $k$, which have a certain
number $h$ of positions that are {\em visible} (external to every base pair),
which can be paired with the last position $j+1$. Note that the position
$j-\theta$ can {\em not} be base-paired with $j$ in $[i,j]$; however,
$j-\theta$ {\em can} be base-paired with $j$ in $[i,j+1]$. Thus in addition
to keeping track of the number $h$ of {\em holes} (positions in 
$i,\ldots,j-\theta-1$ that are external to all base pairs, hence can be paired
with $j$), we introduce the variable $v$ to keep track of the number of 
{\em visible} positions in $j-\theta,\ldots,j$. This explains our need for
the function $Z(i,j,k,h,v)$ as defined in 
equations (\ref{eqn:baseA}-\ref{eqn:indD}). We now proceed to the details,
where for ease of the reader, some definitions are repeated.

Let $\theta=3$ denote the minimum number of unpaired positions required to
be present in a hairpin loop. For a length $n$ homopolymer,
let $1 \leq i \leq j \leq n$, 
$0 \leq k \leq {{n-\theta} \choose 2}$, $0 \leq h \leq j-i-\theta$,
$0 \leq v \leq \theta+1$. Recall that
$Z(i,j,k,h,v)$ denotes the number of secondary structures on
$[i,j]$ for the homopolymer model, that have exactly $k$ $MS_1$ neighbors
(i.e. degree $k$), and there are exactly $h$ unpaired positions
in $[i,j-\theta-1]$ and the position $j-v$ is base-paired to some
$r \in [i,j-v-\theta-1]$ while positions $j-v,j-v+1,\ldots,j$
are not base-paired to any position in $[i,j]$. The parameter $h$
corresponds to the number of {\em visible positions} or {\em holes} 
$[i,j-\theta-1]$ that are external to base pairs in $[i,j]$,
while the parameter $v$ 
corresponds to the number of {\em visible} positions in 
$[j-\theta,j]$ that are external to base pairs in $[i,j]$.

Recall our notation $Z^*(i,j,k) = \sum_h \sum_v Z(i,j,k,h,v)$. 
We begin by initializing $Z(i,j,k,h,v)=0$ for all values in corresponding
ranges. Letting $N(i,j)$ denote the number of secondary structures on
$[i,j]$ for the homopolymer model, as computed by equation (\ref{eqn:numStr1}),
the following recursions describe an algorithm that requires
$O(K \cdot n^3)$ storage and $O(K^2 \cdot n^4)$ time to compute the
probability $Prob[ \mbox{\tiny deg}(s) = k] = \frac{Z^*(1,n,k)}{N(1,n)}$
that a (uniformly chosen) random secondary structure has degree $k$ for
$0 \leq k \leq K$, where $K$ is a user-defined constant bounded above by
$\mbox{maxDegree}(n) = \frac{(n-\theta)(n-\theta-1)}{2}$.
\medskip

Base Case A considers all structures on $[i,j]$, as depicted in
Figure~\ref{fig:baseA}, that are too small to
have any base pairs, hence which have degree zero.

\noindent
{\bf Base Case A:} For $j-i \leq \theta$, define \hfill\break
\begin{align}
\label{eqn:baseA}
Z(i,j,0,0,j-i+1) &= 1
\end{align}

\begin{center} 
\includegraphics[width=0.15\textwidth]{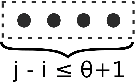}
\captionof{figure}[]{Structures considered in base case A.}
\label{fig:baseA}
\end{center}
\medskip

Base Case B considers all structures on $[i,j]$, as depicted in
Figure~\ref{fig:baseB}, that have only base pair $(i,j)$, since other
potential base pairs would contain fewer than $\theta$ unpaired bases.
The degree of such structures is $1$, since only one base pair can be
removed, and no base pairs can be added.
Moreover, no position in $[i,j]$ is external to the base pair $(i,j)$, so
visibility parameters $h=0,v=0$. The arrow in
Figure~\ref{fig:baseB} indicates that the sole neighbor is the empty
structure, obtained by removing the base pair $(i,j)$.

\noindent
{\bf Base Case B:} For $j-i = \theta+1$ and
$(i,j)$ is a base pair, define  \hfill\break
\begin{align}
\label{eqn:baseB}
Z(i,j,1,0,0) &=1
\end{align}

\begin{center} 
\includegraphics[width=0.35\textwidth]{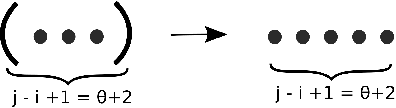}
\captionof{figure}[]{Structures considered in base case B.}
\label{fig:baseB}
\end{center}
\medskip

Base Case C considers the converse situation, consisting of the empty
structure on $[i,j]$ where $j-i = \theta+1$, whose sole neighbor is
the structure consisting of base pair $(i,j)$. The arrow is meant to
indicate that the structure on the right is the only neighbor of that on
the left, as depicted in Figure~\ref{fig:baseC}. Since the size of
the empty structure on $[i,j]$ is $\theta+2$ and every position in
$[i,j]$ is visible (external to every base pair), $h=1$ and $v=\theta+1$. 
the dotted rectangle in Figure~\ref{fig:baseC} indicates the $\theta+1$
unpaired positions at the right extremity as counted by $v=\theta+1$.

\noindent
{\bf Base Case C:} For $j-i = \theta+1$ and
$(i,j)$ not base-paired, define  \hfill\break
\begin{align}
\label{eqn:baseC}
Z(i,j,1,1,\theta+1) &=1
\end{align}
\medskip
\begin{center} 
\includegraphics[width=0.35\textwidth]{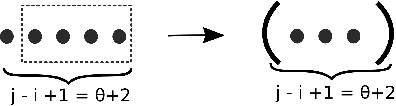}
\captionof{figure}[]{Structures considered in base case C.}
\label{fig:baseC}
\end{center}
\medskip

Base Case D considers the empty structure on $[i,j]$ where
$j-i>\theta+1$. The empty structure is the only structure having degree
maxDegree$(i,j) = \frac{(j-i-\theta+1)(j-i-\theta)}{2}$, since
maxDegree$(i,j)$ many base pairs can be added to the empty structure.
In Figure~\ref{fig:baseD}, the dotted rectangle indicates the
$\theta+1$ rightmost unpaired positions, corresponding to visibility
parameter $v=\theta+1$, while dotted circles indicate the $h = j-i-\theta$
holes, i.e. unpaired positions that could be paired with the rightmost position
$j$.

\noindent
{\bf Base Case D:} For all $(j-i+1) > \theta+2$, the empty structure,
as indicated by $h+v=j-i+1$ (so $h=j-i-\theta$ and $v=\theta+1$), 
has degree maxDegree$(i,j)$ as defined by equation 
\ref{eqn:degreeEmptyStr}, where
\hfill\break
\begin{align}
\label{eqn:baseD}
Z(i,j,\mbox{maxDegree}(i,j),j-i-\theta,\theta+1)  &=1
\end{align}
\medskip
\begin{center} 
\includegraphics[width=0.25\textwidth]{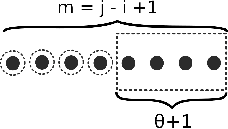}
\captionof{figure}[]{Structures considered in base case D.}
\label{fig:baseD}
\end{center}
\medskip

Inductive Case A considers the case where left and right extremities
$i,j$ form the base pair $(i,j)$, where $j-i>\theta+1$. No position
in $[i,j]$ is visible (external to all base pairs), so visibility
parameters $h=0=v$. Recalling the definition of $Z^*(i,j,k)$
from equation \ref{eqn:Zstar}, we have the following.

\noindent
{\bf Inductive Case A:} For $j-i > \theta+1$ and $(i,j)$ base-paired in 
$[i,j]$,
 \hfill\break
\begin{align}
\label{eqn:indA}
Z(i,j,k,0,0) &= Z(i,j,k,0,0) + Z^*(i+1,j-1,k-1)
\end{align}
From this point on, we use the operator $+=$, so that the previous equation
would be written as $Z(i,j,k,0,0) += Z^*(i+1,j-1,k-1)$.
\medskip
\begin{center} 
\includegraphics[width=0.15\textwidth]{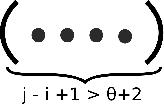}
\captionof{figure}[]{Structures considered in inductive case A.}
\label{fig:indA}
\end{center}
\medskip

Inductive Case B considers the case where last position $j$ base-pairs with
the $r$, where $i<r<j-\theta$. The value $r=i$ has already been considered
in Inductive Case A, and values $r=j-\theta+1,\ldots,j-1$ cannot base-pair to
$j$, since the corresponding hairpin loop would constain less than $\theta$ 
unpaired positions. This situation is depicted in Figure~\ref{fig:indB},
where there are $h$ holes (positions in $[i,j-\theta-1]$ that are external 
to all base pairs) and no visible positions in $[j-\theta,j]$.

\noindent
{\bf Inductive Case B:} For $j-i > \theta+1$ and $(r,j)$ base-paired 
in $[i,j]$ for some $i<r<j-\theta$,
 \hfill\break
\begin{align}
\label{eqn:indB}
Z(i,j,k,h,0) &+= \sum\limits_{r=i+1}^{j-\theta-1} \sum\limits_{k_1+k_2 = k-1}
\sum\limits_{w=0}^{\theta+1} Z(i,r-1,k_1,h-w,w) \cdot
Z^*(r+1,j-1,k_2)
\end{align}
When implemented, this requires a check that $h-w \geq 0$.
\medskip
\begin{center} 
\includegraphics[width=0.35\textwidth]{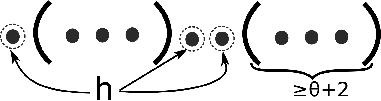}
\captionof{figure}[]{Structures considered in inductive case B.}
\label{fig:indB}
\end{center}
\medskip

For each value $v \in \{1,\ldots,\theta+1\}$,
inductive Case $C(v)$ considers the case where 
position $r \in [i,j-v-\theta-1]$ forms a base pair with position
$j-v$. The value $v=0$ is not considered here, since it was already
considered in Inductive Cases A,B. Note that a structure $s$ of the
format has $k$ neighbors, provided the restriction of $s$ to
$[i,r-1]$ has $k_1$ neighbors, and the restriction of $s$ to 
$[r+1,j-1]$ has $k_2$ neighbors, where $k_1+k_2+vh+1=k$. The term
$vh$ is due to the fact that since base pair $(r,j-v)$ ensures that
all {\em holes} are located in $[i,r-1]$, hence located at 
more than $\theta+1$ distance from all {\em visible}
positions in $[j-v+1,j]$, a neighbor of $s$ can be obtained by adding
a base pair from any hole to any visible suffix position -- there are
$vh$ many such possible base pairs that can be added. Finally, the
last term $+1$ is present, since one neighbor of $s$ can obtained by 
removing base pair $(r,j-v)$. This explains the summation indices and
summation terms in equation~(\ref{eqn:indCv}).
Figure \ref{fig:indCv} depicts a typical structure considered in
case $C(v)$.

\noindent
{\bf Inductive Case C($v$), for $v \in \{1,2,\ldots,\theta+1\}$:} 
For $j-i > \theta+1$ and $(r,j-v)$ base-paired in $[i,j]$,
for some $i<r<j-v-\theta$, where $j-v+1,\ldots,j$ are unpaired 
in $[i,j]$,
\hfill\break
\begin{align}
\label{eqn:indCv}
Z(i,j,k,h,v) &+= Z^*(2,j-1-v,k-1-vh) \\
\nonumber
&+
\sum\limits_{r=i+1}^{j-v-\theta-1}
\sum\limits_{k_1+k_2=(k-1-v h)} \qquad
\sum\limits_{w=0}^{\theta+1} Z(i,r-1,k_1,h-w,w) \cdot
Z^*(r+1,j-1-v,k_2)
\end{align}
The first term $Z^*(2,j-1-v,k-1-vh)$ handles the subcase where
$r=1$, so that $(1,j-v)$ is a base pair, while the second term handles the
subcase where $r>1$. Note that when implemented, this requires a test that
$h-w\geq 0$.

\medskip
\begin{center} 
\includegraphics[width=0.40\textwidth]{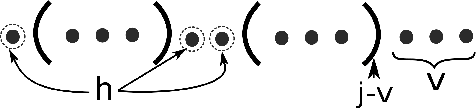}
\captionof{figure}[]{Structures considered in inductive case C$(v)$.}
\label{fig:indCv}
\end{center}
\medskip

Case $D$ considers the case where there are $h$ holes, and
positions $j-\theta-1,\ldots,j$ are unpaired, so that $v=\theta+1$. 
Note that $v=\theta+1$ implies only that $j-\theta,\ldots,j$ are unpaired,
so Case $D$ includes the addition requirement that position
$j-\theta-1$ is unpaired. Structures $s$ satisfying Case $D$ can be
partitioned into subcases where the restriction of $s$ to $[i,j-\theta-1]$
has $h-w$ holes in $[i,(j-\theta-1)-(\theta+1)] = [i,j-2\theta-2]$, and
$1 \leq w \leq \theta+1$ visible positions in $[j-2\theta-1,j-\theta-1]$.
Note that $(h-w)+w=h$, accounting for the $h$ holes in structure $s$
in $[i,j-\theta-1]$, and that it is essential that $w\geq 1$, since the
case $w=0$ was considered in Case $C(\theta+1)$.

The term $\frac{w(w+1)}{2}$ is due to the fact that the
rightmost position $j-\theta-1$ in the restriction of $s$ to $[i,j-\theta-1]$ 
can base-pair with position $j$, but not with $j-1$, etc. since this would
violate the requirement of at least $\theta$ unpaired bases in a hairpin
loop. Similarly, the second rightmost position $j-\theta-2$ in the 
restriction of $s$ to $[i,j-\theta-1]$ can base-pair with positions $j$
and $j-1$, but not with $j-2$, etc.; as well, the
third rightmost position $j-\theta-3$ can base-pair with positions $j$,
$j-1$ and $j-2$, but not with $j-3$, etc. The number of neighbors of
$s$ produced in this fashion is thus $\sum_{i=1}^w i = \frac{w(w+1)}{2}$.
Finally, the
term $(\theta+1)(h-w)$ is due to the fact that each of the 
$h-w$ holes in the restriction of $s$ to $[i,j-\theta-1]$ can base-pair
to each of the $(\theta+1)$ positions in $[j-\theta,j]$.

The argument just given shows the following. 
Let $s$ be a structure that satisfies
conditions of Case $D$ with $h$ holes and $v=\theta+1$ visible positions, and
suppose that the restriction of $s$ to $[i,j-\theta-1]$ has
$h-w$ holes and $w$ visible positions. Then
$s$ has $k$ neighbors provided that the restriction of $s$ to
$[i,j-\theta-1]$ has $k-\frac{w(w+1)}{2} - (\theta+1)(h-w)$ neighbors
on interval $[i,j-\theta-1]$. The equation (\ref{eqn:indD}) now follows.

\noindent
{\bf Inductive Case D:} 
For $j-i > \theta+1$ and 
$j-\theta-1,j-\theta,\ldots,j$ unpaired in $[i,j]$, and $1 \leq h < j-\theta-i$,
\hfill\break
\begin{align}
\label{eqn:indD}
Z(i,j,k,h,\theta+1) &+= \sum\limits_{w=1}^{\theta+1}
Z(i,j-\theta-1,k- \frac{w(w+1)}{2} - (\theta+1)\cdot(h-w), h-w,w)
\end{align}
\begin{center} 
\includegraphics[width=0.40\textwidth]{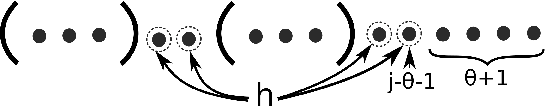}
\captionof{figure}[]{Structures considered in inductive case D.}
\end{center}
As in Case C($v$), when implemented, this requires a test that
$h-w\geq 0$.

Our implementation of the recursions (\ref{eqn:baseA}-\ref{eqn:indD})
has been cross-checked with exhaustive enumeration; moreover, we always have
that $\sum_{k} Z^*(i,j,k) = N(i,j)$, so the degree density is correctly
computed.

\subsection{Faster algorithm in the homopolymer case}
\label{section:fasterAlgoHomopolymer}

The algorithm described in Section~\ref{section:degreeDensityAlgorithm}
requires $O(K^2n^4)$ time and $O(Kn^3)$ space, where $K$ is a user-specified
degree bound $K \leq \frac{(n-\theta)(n-\theta-1)}{2}$. 
By minor changes, that algorithm can be modified to compute the degree 
density function
$p(k) = \frac{Z^*(1,n,k)}{N(1,n)}$ for any given RNA sequence $a_1,\ldots,a_n$.
In the case of a homopolymer, any two positions are allowed to base-pair
(regardless of whether the base pair is a Watson-Crick or wobble pair), 
provided only that every hairpin loop contains at least $\theta$ unpaired
positions. For homopolymers, we have a faster algorithm that requires
$O(K^2 n^3)$ time and $O(Kn^2)$ space. Since nucleotide identity is unimportant,
instead of $Z(i,j,k,h,v)$, we describe the function $\ZZ(m,k,h,v)$, where
$m$ corresponds to the length $j-i+1$ of interval $[i,j]$.

\begin{align*}
\ZZ^*(m,k) &= \sum_{h=0}^{m-\theta-1} \sum_{v=0}^{\theta+1} \ZZ(m,k,h,v) \\
N(m) &= \sum_{k=1}^{\frac{(m-\theta)(m-\theta-1)}{2}} \ZZ^*(m,k) 
\end{align*}

We begin by initializing $\ZZ(m,k,h,v)=0$ for all 
$1 \leq m \leq n$, $0 \leq k \leq \frac{(m-\theta)(m-\theta-1)}{2}$,
$0 \leq h \leq m-2$, and $0 \leq v \leq \theta+1$. If $h<0$, we
assume that $\ZZ(m,k,h,v)=0$. 
\medskip

\noindent
{\bf Base Case A:} For $1 \leq m \leq \theta+1$, define \hfill\break
\begin{align}
\label{eqn:baseAbis}
\ZZ(m,0,0,m) &= 1
\end{align}

\noindent
{\bf Base Case B:} For $m = \theta+2$, define  \hfill\break
\begin{align}
\label{eqn:baseBbis}
\ZZ(m,1,0,0) &=1
\end{align}

\noindent
{\bf Base Case C:} For $m = \theta+2$,
define  \hfill\break
\begin{align}
\label{eqn:baseCbis}
\ZZ(m,1,1,\theta+1) &=1
\end{align}

\noindent
{\bf Base Case D:} For all $m > \theta+2$, define
\hfill\break
\begin{align}
\label{eqn:baseDbis}
\ZZ(m,\frac{(m-\theta)(m-\theta-1)}{2},m-\theta-1,\theta+1)  &=1
\end{align}

\noindent
{\bf Inductive Case A:} For $m > \theta+2$ and
$1 \leq k \leq \frac{(m-\theta)(m-\theta-1)}{2}$, define
 \hfill\break
\begin{align}
\label{eqn:indAbis}
\ZZ(m,k,0,0) &+= \ZZ^*(m-2,k-1)
\end{align}

\noindent
{\bf Inductive Case B:} For $m > \theta+2$, 
$1 \leq k < \frac{(m-\theta)(m-\theta-1)}{2}$, and
$0 \leq h \leq m-\theta-1$, define
 \hfill\break
\begin{align}
\label{eqn:indBbis}
\ZZ(m,k,h,0) &+= \sum\limits_{r=2}^{m-\theta-1} \sum\limits_{k_1+k_2 = k-1}
\sum\limits_{w=0}^{\theta+1} \ZZ(r-1,k_1,h-w,w) \cdot \ZZ^*(m-r-1,k_2)
\end{align}
When implemented, this requires a check that $h-w \geq 0$.

\noindent
{\bf Inductive Case C($v$):}
For $v \in \{1,2,\ldots,\theta+1\}$ and $m > \theta+2$, define
\hfill\break
\begin{align}
\label{eqn:indCvbis}
\ZZ(m,k,h,v) &+= \ZZ^*(m-v-2,k-1-vh) \\
\nonumber
&+ \sum\limits_{r=2}^{m-v-\theta-1}
\sum\limits_{k_1+k_2=(k-1-v h)} \qquad
\sum\limits_{w=0}^{\theta+1} \ZZ(r-1,k_1,h-w,w) \cdot
\ZZ^*(m-v-r-1,k_2)
\end{align}
\medskip

\noindent
{\bf Inductive Case D:} 
For $m > \theta+2$,
$1 \leq k < \frac{(m-\theta)(m-\theta-1)}{2}$, and
$1 \leq h < m-\theta-1$,
\hfill\break
\begin{align}
\label{eqn:indDbis}
\ZZ(m,k,h,\theta+1) &+= \sum\limits_{w=1}^{\theta+1}
\ZZ(m-\theta-1,k- \frac{w(w+1)}{2} - (\theta+1)\cdot(h-w), h-w,w)
\end{align}
Note that $h$ is strictly less than $m-\theta-1$, since the case
$h=m-\theta-1$ occurs only when additionally $v=\theta+1$, which only arises
in the empty structure. The general case for the empty structure was handled
in Base Case D.
When implemented, this requires a check that $h-w \geq 0$.

\section{Statistical methods}
\label{section:statisticalMethods}

Current software for probability distribution fitting of connectivity data, 
such as {\tt Matlab}\texttrademark, {\tt Mathematica}\texttrademark, {\tt R} and 
{\tt powerlaw} \cite{Alstott.po14}, appear to require an input file containing
the connectivity of each node in the network. In the case of RNA secondary structures,
this is only possible for very small sequence length. To analyze connectivity
data computed by the algorithm of Section~\ref{section:fasterAlgoHomopolymer},
we had to implement code to compute the maximum likelihood estimation for
scaling factor $\alpha$ in a power-law fit, the optimal degree 
$k_{\mbox\tiny min}$ beyond which connectivity data is fit by a power-law,
and the associated $p$-value for Kolmogorov-Smirnov goodness-of-fit, 
as described in \cite{newmanPowerLawFitEmpiricalData}. We call the resulting
code {\tt RNApowerlaw}. This section explains those details.

Recall the definition of the {\em zeta function}
\begin{align}
\zeta(\alpha) &= \sum_{n = n_0}^{\infty} n^{-\alpha}
\end{align}
We use both the generalized zeta function (\ref{eqn:genZetaFun}), 
as well as the truncated generalized zeta function (\ref{eqn:genTruncZetaFun}),
defined respectively by
\begin{align}
\label{eqn:genZetaFun}
\zeta(\alpha;n_0) &= \sum_{n = n_0}^{\infty} n^{-\alpha}\\
\label{eqn:genTruncZetaFun}
\zeta(\alpha;n_0,n_1) &= \sum_{n = n_0}^{n_1} n^{-\alpha}
\end{align}
Given a data set $D = \{x_1,\ldots,x_n\}$ of positive integers in the
range $[k_0,k_1]$, the likelihood $L(D|\alpha)$ 
that the data fits a truncated power-law with scaling factor $\alpha$
and range $[k_0,k_1]$ is defined by
\begin{align}
\label{eqn:genLikelihood}
L(D|\alpha) &= \Pi_{i=1}^n \frac{x_i^{-\alpha}}{\zeta(\alpha;k_0,k_1)}
\end{align}
Rather than sampling individual RNA secondary structures to estimate
the connectivity of the secondary structure network for a given homopolymer,
the algorithms from Sections \ref{section:degreeDensityAlgorithm} and
\ref{section:fasterAlgoHomopolymer} directly compute the exact number
$N(k)$ of secondary structures having degree $k$, for all $k$ within
a certain range. It follows that the likelihood $L(D|\alpha)$ that
secondary structure connectivity fits a power-law with scaling factor $\alpha$
is given by
\begin{align}
\label{eqn:likelihoodSecStr}
L(D|\alpha,k_0,k_1) &= \Pi_{k=k_0}^{k_1} \left(
\frac{k^{-\alpha}}{\zeta(\alpha;k_0,k_1)} \right)^{N(k)}
\end{align}
hence the log likelihood is
is given by
\begin{align}
\label{eqn:loglikelihoodSecStr}
\mathcal{L}(D|\alpha,k_0,k_1) &= 
- \left( \log(\zeta(\alpha;k_0,k_1)) \sum_{k=k_0}^{k_1} N(k) \right)
- \left( \alpha \sum_{k=k_0}^{k_1} N(k) \log(k) \right)
\end{align}
The parameter $\widehat{\alpha}$ which maximizes the log likelihood is
determined by applying SciPy function {\tt minimize} (with Nelder-Mead
method) to the negative log
likelihood, starting from initial estimate $\alpha_0$, 
taken from equation (3.7) of \cite{newmanPowerLawFitEmpiricalData}
\begin{align}
\label{eqn:alpha0Newman}
\alpha_0 &= 1 + n \left( 
\sum_{i=1}^n \ln \frac{x_i}{x_{\mbox{\tiny min}}- 1/2}  \right)^{-1}
\end{align}
which in our notation yields
\begin{align}
\label{eqn:alpha0}
\alpha_0 &= 1 + \left( \sum_{k=k_0}^{k_1} N(k) \right) \cdot
\left\{ \sum_{k=k_0}^{k_1} N(k)
\cdot \log \left( \frac{k}{k_0 - 1/2}
 \right)  \right\}^{-1}
\end{align}
In results and tables of this paper, we often write the 
maximum likelihood estimate (MLE) $\widehat{\alpha}$ simply as $\alpha$.

We compute the Kolmogorov-Smirnov (KS) $p$-value, following
\cite{newmanPowerLawFitEmpiricalData}, as follows. 
Given observed relative frequency
distribution $D$ and a power-law fit $P$ with scaling factor $\alpha$, 
the KS distance is defined to
be the maximum, taken over all $k \in [k_0,k_1]$ of the absolute difference
between the cumulative distribution function (CDF) for the data
evaluated at $k$, and the CDF for the power-law, evaluated at $k$
\begin{align}
\label{eqn:KSdistance}
KS(k_{\mbox\tiny min},k_{\mbox\tiny max}) &= 
\max_{k_{\mbox\tiny min} \leq x \leq k_{\mbox\tiny max}} |C_a(x) - C_f(x)|
\end{align}
where $C_a$ and $C_f$ are the actual and fitted cumulative density functions, 
respectively.
The KS $p$-value for the fit of data $D$ by power-law $P$ with scaling
factor $\alpha$, is determined by (1) sampling a large number 
($N=1000$) of synthetic data sets $D_i$ from a true power-law distribution with
scaling factor $\alpha$, (2) computing the KS distance between
each synthetic data set $D_i$ and its power law fit with MLE scaling factor 
$\alpha_i$, (3) reporting the  proportion of KS distances that exceed the
KS distance between the original observed data set and its power-law fit
with scaling factor $\alpha$.

Following \cite{newmanPowerLawFitEmpiricalData}, $k_{\mbox{\tiny min}}$  is
chosen to be that degree $k_0$, such that the KS distance for the optimal
power-law fit is smallest. In contrast, $k_{\mbox\tiny max}$ is always taken
to be the maximum degree in the input data. We have 
implemented Python code to compute $\alpha_0$, $\alpha$, $k_{\mbox\tiny min}$,
KS distance, $p$-value, etc. as described above. In
Section~\ref{section:results}, we compare results of our code with that from
{\tt powerlaw} \cite{Alstott.po14} for very small homopolymers. Though
our code does not do lognormal fits, this is performed by {\tt powerlaw},
where the density function for the lognormal distribution with parameters
$\mu,\sigma$ is defined by
\begin{align}
\label{eqn:lognormal}
p(x) &= \frac{ \exp\left( -\frac{(log(x)-\mu)^2}{2 \sigma^2} \right) }
{x \cdot \sqrt{2 \pi \sigma^2}} 
\end{align}
In computing the $p$-value for power-law goodness-of-fit using
Kolmogorov-Smirnov statistics, it is necessary to sample synthetic data
from a (discrete) power-law distribution with scaling factor $\alpha$, a
particular type of multinomial distribution.
Given an arbitrary multinomial distribution with probability $p_i$ for each
$1\leq i \leq m$, it is straightforward to create $M$ synthetic data sets, each 
containing $N$ sampled values, in time $O(mNM)$; however, since $M=1000$ and $N$ is the
(exponentially large) number of all secondary structures having degrees in
$[k_{\mbox\tiny min}, k_{\mbox\tiny max}]$, the usual {\em sequential} method
would require prohibitive run time. Instead, we implemented the much faster
{\em conditional} method \cite{fastMultinomialSampling}.  Our goal is to
sample from a multinomial distribution given by
\begin{align}
\displaystyle
Prob\left[ X_1=x_1,X_2=x_2,\ldots,X_{m} \right] &=
\frac{N!}{\prod_{i=1}^{m} x_i!} \prod_{i=1}^{m} p_i^{x_i} 
\end{align}
where $m = k_{\mbox\tiny max}-k_{\mbox\tiny min}+1$ is the number
of degrees in the synthetic data, and in the sample set of size $N$ there
are $x_i$ many occurrences of degree $k_{\mbox\tiny min}+i$. To do this, 
we sample $X_1$ from the binomial distribution of $N$ coin tosses with heads 
probability $p_1$, then $X_2$ from the binomial distribution of $N-x_1$ coin tosses
with heads probability  $\frac{p_2}{1-p_1}$,
then $X_3$ from the binomial distribution of $N-x_1-x_2$ coin tosses
with heads probability  $\frac{p_2}{1-p_1-p_2}$, etc. where each $x_i$ is
determined with the function {\tt binom} from Python {\tt Scipy.stats}.

\section{Results}
\label{section:results}

In Section~\ref{section:dataAnalysis}, we use the algorithms described in previous sections
to compute RNA secondary structure connectivity and determine optimal power-law fits, and
in Section~\ref{section:preferentialAttachment} we show that preferential attachment
holds for the network of RNA structures.

\subsection{Analysis of RNA networks using {\tt RNAdensity} and {\tt RNApowerlaw}}
\label{section:dataAnalysis}

The algorithm {\tt RNAdensity} described in Section~\ref{section:fasterAlgoHomopolymer}
was used to compute absolute and relative degree frequencies for the following
cases:
(1) homopolymers of length $n=10,12,\ldots,40$ with $\theta=3$ for maximum possible
degree upper bound $K= \frac{(n-\theta)(n-\theta-1)}{2}$,
(2) homopolymers of length $n=30,35,\ldots,150$ with $\theta=3$, where
degree upper bound $K=2n$ for $n\in [30,100]$ and $K=n+30$ for $n \in [105,150]$,
(3) homopolymers of length $n=30,35,\ldots,150$ with $\theta=1$, where
degree upper bound $K=2n$ for $n\in [30,100]$ and $K=n+30$ for $n \in [105,150]$.
For small homopolymers of length at most $30$, optima values for
$k_{\mbox\tiny min}$, power-law scaling factor $\alpha$, Kolmogorov-Smirnov distance
were determined using software
{\tt powerlaw} {\tt powerlaw} \cite{Alstott.po14} as well as {\tt RNApowerlaw} from
Section~\ref{section:statisticalMethods}. 
Table~\ref{table:pvaluesPowerLawRNApowerlaw} summarizes these results, which show
the agreement between {\tt powerlaw} and {\tt RNApowerlaw}. Moreover, both
both programs suggest that formal hypothesis testing should reject the null hypothesis
that a power-law distribution fits connectivity data; indeed, {\tt powerlaw} determines
a negative log odds ratio $R$ for the logarithm of power-law likelihood over 
lognormal likelihood, indicating a better fit for the lognormal distribution, and
{\tt RNApowerlaw} determines small $p$-values for Kolmogorov-Smirnov goodness-of-fit
of a power-law distribution. Figure~\ref{fig:100mer}a shows connectivity
density function for a 100-mer, with overlaid Poisson and lognormal distributions --
since Erd{\"o}s-R{\'e}nyi random graphs have a Poisson degree distribution
\cite{barabasiReviewSmallWorld}, it follows that RNA secondary structure networks
are strikingly different than random graphs.
Figure~\ref{fig:100mer}b shows a portion of the power-law fit for degrees in
$[k_{\mbox\tiny min}, k_{\mbox\tiny max}]$, where scaling factor 
$\alpha \approx 7.876$ and $k_{\mbox\tiny min}=83$. Although maximum degree probability
at $k_{\mbox\tiny peak}$ is less than $0.05$ for the raw data, the connectivity
density for $[k_{\mbox\tiny min}, k_{\mbox\tiny max}]$ is normalized, which explains
why the degree probability for $k_{\mbox\tiny min}$ is $\approx 0.08$. Visual inspection
suggests an excellent fit for the power-law distribution, despite a Kolmogorov-Smirnov
$p$-value of $\approx 0$. This apparent contradiction highlights the point of view
taken in this paper -- rather than being take sides in the  Barab{\'a}si-Clauset
polemic mentioned in the introduction, our opinion is that a power-law fit for
biological data can provide valuable insight into the underlying network, even though from
a technical point of view, hypothesis testing may lead to rejection of the power-law
fit. The seemingly good power-law fit for RNA connectivity data indicated in
Figure~\ref{fig:100mer} and other figures not shown here led to the investigation
of preferential attachment described in Section~\ref{section:preferentialAttachment}.

Since {\tt powerlaw} requires input files of (individually observed) connectivity degrees, 
when creating Table~\ref{table:pvaluesPowerLawRNApowerlaw}, we could not
run {\tt powerlaw} for homopolymer length greater than $28$, for which latter the input
file contained $50,642,017$ values.  A potentially attractive  alternative is to 
generate input files consisting of  $N \cdot p(k)$ many occurrences
of the value $k$, where $N=10^2,10^3,\ldots,10^7$ denotes the total number
of samples, and where relative frequency $p(k)$ is the proportion of structures having
degree $k$. However, 
Table~\ref{table:dependenceOfPowerLawSoftwareOnSampleSize} shows that neither 
scaling factor $\alpha$ nor $k_{\mbox\tiny min}$ are correct with this 
alternative approach,  even for small homopolymers
of length 20, 30 and 40. This table justifies the need for our implementation of
{\tt RNApowerlaw} as described in Section \ref{section:statisticalMethods}. 
Table~\ref{table:outputRNAdegreeUpTo150} shows 
maximum likelihood scaling factors $\alpha$ and Kolmogorov-Smirnov $p$-values for
optimal power-law fis of connectivity data for homopolymers of lengths from
$30$ to $150$. 

Figure~\ref{fig:cutoffValueGraph}a shows a scatter plot with regression line for
the {\em cut-off} values $x_c$, defined to be the least value such that the probability
that a secondary structure for length $n$ homopolymer has degree greater
that $x_c$ is at most $0.01$. From this figure, we determined that for homopolymer 
length $n>100$, it more than suffices to take degree upper bound $K=n+30$.
Figure~\ref{fig:cutoffValueGraph}b shows the connectivity degree distribution
for a homopolymer of length $20$, where degree $dg(s)$ is redefined to be the number
of structures $t$ that can be obtained from $s$ by adding, removing, or {\em shifting}
a base pair in $s$. The so-called $MS_2$ move set, consisting of an addition, removal
or shift of a base pair is the default move set used in RNA kinetics software 
{\tt kinfold} \cite{Lorenz.amb11}. Although a dynamic programming algorithm was
described in \cite{Clote.po15} to compute the average $MS_2$ network degree, the
methods of this paper do not easily generalize to $MS_2$ connectivity densities.
Figure~\ref{fig:powerlaw20ntHomopolymerMS2} shows a least-squares regression line
for the log-log density plot for $MS_2$ connectivity (computed by brute-force) for
a homopolymer of length $20$, together with an optimal power-law fit computed by
{\tt RNApowerlaw}. Since there are only $106.633$ secondary structures for the 20-mer 
with $\theta=3$, we ran {\tt powerlaw} on $MS_2$ connectivity data,
which determined $\alpha = 6.84$, $k_{\mbox\tiny xmin}=36$, and a log odds
ratio $R= -2.06$ with $p$-value of $0.248$. Since 
{\tt RNApowerlaw} determined
$\alpha = 6.84$,  $k_{\mbox\tiny xmin}=36$, and a 
Kolmogorov-Smirnov $p$-value of $0.219$, we can {\em not} reject the null hypothesis
that a power-law distribution fits the tail of $MS_2$ connectivity data for
a 20-mer.

\subsection{Preferential attachment of RNA secondary structures}
\label{section:preferentialAttachment}

In this section, we provide computational and theoretical arguments that suggest that
{\em preferential attachment} holds in the homopolymer RNA secondary structure model. 
Before proceeding we recall basic definitions and notation. The notion of
homopolymer secondary structure was defined at the beginning of 
Section~\ref{section:preliminaries}; throughout this section, 
we denote the set of
all secondary structures for a length $n$ homopolymer by $\mathcal{S}_n$.
If $s \in \mathcal{S}_{n}$ and $s' \in \mathcal{S}_{n+1}$, then we say that
$s'$ {\em extends} $s$, and write 
$s \prec s'$, if $s'$ is obtained by either (1) appending unpaired nucleotide $n+1$
to the right of $s$, so that the dot-bracket notation of $s'$ is
$s \bullet$, or (2) adding a base pair $(k,n+1)$ to $s$, where 
$k \in [1,n-\theta]$ is
{\em external} to every base pair of $s$, i.e. it is not the case that $i \leq k \leq j$
for any base pair $(i,j)$ of $s$. Since the seminal papers of
\cite{steinWaterman,nussinovJacobson}, this notion of extension has been used as
the basis of recursive and/or dynamic programming algorithms to
count/enumerate all secondary structures and to
compute minimal free energy structures.

A reasonable approach to establish {\em preferential attachment} in the
context of RNA secondary structures is to show that 
if the degree of $s$ is greater than or equal to the degree of $t$ in the
network $\mathcal{S}_n$, then for most extensions $s'$ of $s$, and $t'$ of
$t$, the degree of $s'$ is greater than or equal to the degree of $t'$ in the
network $\mathcal{S}_{n+1}$. We show that this is indeed the case
for homopolymers of modest length, using
by brute-force, exhaustive computations in this section, 
and we rigorously establish this result
for a relaxation $\mathcal{S}^*_n$ of the secondary structure model in
Appendix \ref{section:mathematicalValidationPreferentialAttachment}.


For fixed homopolymer length $n$, define the set $\mathcal{A}_n$ of 
4-tuples $(s,t,s',t')$ by
\begin{align}
\label{eqn:defPrefAttachA}
\mathcal{A}_n &=
\{ (s,t,s',t') : s,t \in \mathcal{S}_n,  s',t' \in
\mathcal{S}_{n+1}, s \ne t, s \prec s', t \prec t',  dg(s) \geq dg(t)  \} 
\end{align}
A 4-tuple $(s,t,s',t') \in \mathcal{A}_n$ {\em succeeds} in demonstrating preferential
attachment if $dg(s') \geq dg(t')$; otherwise the 4-tuple {\em fails} to 
demonstrate preferential attachment. Let {\sc Succ}$_n$ [resp. {\sc Fail}$_n$] denote
the set of 4-tuples that succeed [resp. fail] to demonstrate preferential attachment,
so that $\mathcal{A}_n = \mbox{\sc Succ}_n \cup \mbox{\sc Fail}_n$ (when $n$ is clear,
we drop the subscripts, and we ambiguously also
use {\sc Succ} and {\sc Fail} to denote the sizes of these sets). Our first
quantification of preferential attachment is given by the proportion 
{\sc Succ}/({\sc Succ}+{\sc Fail}):
\begin{align}
\label{eqn:proportionSuccesses}
P({\mbox\sc Succ}_n)  &=
\frac{|\{ (s,t,s',t') \in \mathcal{A}_n: dg(s')\geq dg(t')\}|}
{|\mathcal{A}_n|}  
\end{align}
Since secondary structures have possibly quite different degrees and numbers of 
extensions, a more accurate measure (in our opinion) of preferential attachment is
given by  $\langle p(s',t'|s,t) \rangle$, defined as follows. For distinct, fixed
structures $s,t \in \mathcal{S}_n$, define
\begin{align}
\label{eqn:condProb_stNotAvg}
p(s',t' | s,t) &= 
P\left( dg(s') \geq dg(t') | dg(s) \geq dg(t), s \prec s', t \prec t' |
dg(s) \geq dg(t) \right) \\
&= \nonumber
\left\{ \begin{array}{ll}
0 &\mbox{if $dg(s)<dg(t)$}\\
\frac{ |\{ (s',t'): s',t' \in \mathcal{S}_{n+1}, s'\ne t',
s \prec s', t \prec t', dg(s') \geq dg(t') \} |}
{ |\{ (s',t'): s',t' \in \mathcal{S}_{n+1}, s'\ne t',
s \prec s', t \prec t' \} |} &\mbox{else}
\end{array} \right. \\
\label{eqn:condProb_st}
\langle p(s',t' | s,t) \rangle 
&= \frac{ \sum_{s,t \in \mathcal{S}_n, s\ne t} p(s',t'|s,t) }
{ |\{ (s,t): s,t \in \mathcal{S}_n, s\ne t, dg(s) \geq dg(t) \} |}
\end{align}

To clarify these definitions, we consider a small example. If $n=5$,
then $\mathcal{S}_n$  consists of the two structures
$\bullet \bullet \bullet \bullet \bullet$, and 
$\op \bullet \bullet \bullet \cp$, while
$\mathcal{S}_{n+1}$  consists of the four structures
$\bullet \bullet \bullet \bullet \bullet \bullet$, 
$\op \bullet \bullet \bullet \bullet \cp$, 
$\bullet \op \bullet \bullet \bullet \cp$, 
$\op \bullet \bullet \bullet \cp \bullet$.
Fix $s$ to be $\op \bullet \bullet \bullet \cp$,
and $t$ to be 
$\bullet \bullet \bullet \bullet \bullet$.
Since the only neighbor of
$s$ is $t$, and vice-versa, it follows that $dg(s)=1=dg(t)$.
By definition, an extension $s'$ of $s$ is obtained either by adding
an unpaired nucleotide to $s$ at position $n+1$, or by adding a base
pair $(k,n+1)$ to $s$, where $k$ is external to all base pairs of $s$.
In the current case, the only possible extension of $s$ is produced by
the former rule, thus obtaining
$s'=\op \bullet \bullet \bullet \cp \bullet$. Note that we do {\em not}
consider the structure
$\bullet \op \bullet \bullet \bullet \cp$ to be an extension of $s$.
In contrast, the structure $t= \bullet \bullet \bullet \bullet \bullet$
has three extensions:
$t'_1= \bullet \bullet \bullet \bullet \bullet \bullet$,
$t'_2= \op \bullet \bullet \bullet \bullet \cp$,
$t'_3= \bullet \op \bullet \bullet \bullet  \cp$, where by definition, 
$t'_4= \op \bullet \bullet \bullet \cp \bullet$ is not considered to be an extension of
$t$.
Clearly $dg(s') = dg(t'_2)$, $dg(s') = dg(t'_3)$, but 
$dg(s')=1 \not\geq dg(t'_1)=3$, so
\begin{align*}
\frac{2}{3} &= 
\frac{ |\{ (s',t'): dg(s') \geq dg(t') \land
s \prec s', t \prec t', s,t \in \mathcal{S}_{n+1} \} |}
{ |\{ (s',t'): s \prec s', t \prec t', s,t \in \mathcal{S}_{n+1} \} |}
\end{align*}
so $p(s',t'|s,t)= 0.6667$. If we now take
$s = \bullet \bullet \bullet \bullet \bullet$, and 
$t = \op \bullet \bullet \bullet \cp$, we find that
\begin{align*}
\frac{3}{3} &= 
\frac{ |\{ (s',t'): dg(s') \geq dg(t'),
s \prec s', t \prec t', s,t \in \mathcal{S}_{n+1} \} |}
{ |\{ (s',t'): s \prec s', t \prec t', s,t \in \mathcal{S}_{n+1} \} |}
\end{align*}
so $p(s',t'|s,t) = 1$. The (arithmetical) average of $1$ and $2/3$ is 
$\frac{2+3}{3} = 5/6 = 0.8333$,
which is the value $\langle p(s',t'|s,t) \rangle$ found in the first row
and last column of Table~\ref{table:preferentialAttachment}. In contrast
to this value, averaged over all pairs $s,t \in \mathcal{S}_n$ for which 
$dg(s) \geq dg(t)$, the total number of {\em successes} [resp. 
{\em failures}] is $5$ [resp. $1$], where a success [resp. failure] is
defined as a 4-tuple $(s,t,s',t')$ for which
$s,t \in \mathcal{S}_n$, $s',t' \in \mathcal{S}_{n+1}$,
$s \prec s'$, $t \prec t'$, $dg(s) \geq dg(t)$ and $dg(s') \geq dg(t')$
[resp. $dg(s')<dg(t')$]. Thus we find the value $5/6 = 0.8333$ in
the first row and $7$th column; however, it is not generally true that
{\sc Succ}$_n$/ ({\sc Succ}$_n$+ {\sc Fail}$_n$) agrees with
$\langle p(s',t'|s,t) \rangle$, since $s,t$ may have different degrees  in
$\mathcal{S}_n$, and each may have a different number of 
extensions $s\prec s'$, $t \prec t'$, and each $s',t'$ may each have different
degrees in  $\mathcal{S}_{n+1}$.

For homopolymers of length $5$ to $18$, Table~\ref{table:preferentialAttachment} shows 
the proportion of successes, $P(\mbox{\sc Succ})$, defined
in equation~(\ref{eqn:proportionSuccesses}), as well as the
average preferential attachment probabilities 
$\langle p(s',t'|s,t) \rangle$, defined in  equation (\ref{eqn:condProb_st}).
Values in this table, produced by brute-force, exhaustive computation, were obtained
for each homopolymer length $n \in [5,19]$, by first generating the collections 
$\mathcal{S}_n$, then computing the degrees $dg(s)$ for $s \in \mathcal{S}_n$
by brute force, then considering all ${n \choose 2}$  unordered pairs $s,t$ of distinct 
structures in $\mathcal{S}_n$. So far, the number of instances to consider is large -- 
for instance, when $n=18$, there are ${n \choose 2} = 274,564,461$ unordered
pairs of distinct structures from $\mathcal{S}_n$.
For each pair of distinct structures $s,t$ from $\mathcal{S}_{n}$ that satisfy
$dg(s) \geq dg(t)$, a list $L_s$ [resp. $L_t$] of extensions $s \prec s'$ 
[resp. $t \prec t'$] were computed, where the size of each list is
one plus the number of
positions in $[1,n-\theta]$ that are external to every base pair of
$s$ [resp. $t$]. Subsequently, the proportion of extension pairs $s',t'$
that satisfy $dg(s') \geq dg(t')$ is determined, thus yielding $p(s',t'|s,t)$.
Finally, the mean and standard deviation of the latter yields $\langle p(s',t'|s,t) \rangle$, shown in the last column of the table.
For $n=18$, more than one trillion
($1.36 \cdot 10^9$) 4-tuples $(s,t,s',t')$ where considered for which
$dg(s) \geq dg(t)$ -- this value is used in the denominator of
equation~(\ref{eqn:condProb_st})!

From the values in Table~\ref{table:preferentialAttachment},
it appears that the RNA homopolymer secondary
structure model does demonstrate preferential attachment. This, in our opinion,
may provide theoretical justification for the close approximation of the tail of
degree distributions by a power-law distribution, even though a rigorous statistical
test by bootstrapping Kolmogorov-Smirnov values appears to reject this hypothesis.

\section{Conclusion}
\label{section:conclusion}

Since the pioneering work of Zipf on the scale-free nature of natural
languages \cite{zipf}, various groups have found scale-free networks
in diverse domains ranging from
communication patterns of dolphins \cite{McCowan.jcp02},
metabolic networks \cite{Jeong.n00},
protein-protein interaction networks
\cite{Ito.pnas00,schwikowskiProteinProteinNetworks},
protein folding networks \cite{Bowman.pnas10},
genetic interaction networks \cite{Tong.s04,VanNoort.er04}
to multifractal time series \cite{Budroni.pre17}.
These discoveries have
galvanized efforts to understand biological networks from a mathematical 
and topological standpoint. Using mathematical analysis,
Barab{\'a}si and Albert \cite{Barabasi.s99} established that scale-free
networks naturally emerge when networks are dynamic, whereby newly
accrued nodes are preferentially connected to nodes already having high
degree. On such grounds, one might argue that protein folding 
networks and protein-protein interaction (PPI) networks should
exhibit scale-free properties, since nature is likely to reuse and 
amplify fast-folding domains -- cf.
Gilbert's exon shuffling
hypothesis \cite{Gilbert.n78}.
Indeed, Cancherini et al.  \cite{Cancherini.bg10} have established that
in 4 metazoan species analyzed 
({\em H. sapiens}, {\em M. musculus}, {\em D. , melanogaster}, {\em C. elegans})
those genes, which are enriched in exon shuffling events, 
displayed a higher connectivity
degree on average in protein-protein interaction (PPI) networks; i,e.
such genes had a larger number of interacting partners.
On similar grounds that nature should reuse and amplify successful
metabolic networks, one might argue that metabolic networks should exhibit
scale-free properties. However, rigorous statistical analysis has shown
that metabolic networks fail a goodness-of-fit test for scale-free 
distribution, while PPI satisfy a goodness-of-fit test for scale-free
distributions over a certain range of connectivity
\cite{Khanin.jcb06,newmanPowerLawFitEmpiricalData}.

In this paper, we have introduced a novel algorithm to compute the
connectivity density function for a given RNA homopolymer. Our algorithm
requires $O(K^2n^4)$ run time and $O(Kn^3)$ storage, where $K$ is a 
user-specified
degree bound $K \leq \frac{(n-\theta)(n-\theta-1)}{2}$. Short of exhaustively
listing secondary structures by brute-force, no such algorithm existed 
prior to our work. Since existent software appears unable to perform
power-law fitting for exponentially large RNA connectivity data,
we have implemented code to compute and statistically
evaluate the maximum likelihood power-law fit for an input histogram.
Perhaps this code may prove useful to other groups working with data 
where the underlying data set is so large that it cannot be enumerated,
as is the case with connectivity of RNA secondary structure networks.
Using code {\tt RNAdensity} and {\tt RNApowerlaw}, 
we have computed the connectivity
density function for RNA secondary structure networks for homopolymers
of length up to $150$. Statistical nalysis shows that, almost invariably,
there is no statistically significant power-law fit of connectivity density 
function, despite the fact the strikingly good visual fit shown in
Figure~\ref{fig:100mer} and other data (not shown). Nevertheless, 
power-law fittomg provides a useful paradigm leading to
the establishment of preferential attachment, shown in the previous section 
and Appendix.

%
%

\section*{Acknowledgements}

We would like to thank Amir H. Bayegan for providing the figures in
Section~\ref{section:degreeDensityAlgorithm} and Jenny Baglivo for
a reference for the conditional method to sample from the multinomial distribution.
This work was partially
supported by National Science Foundation grant DBI-1262439.  Any opinions, 
findings, and conclusions or recommendations expressed in this material are
those of the authors and do not necessarily reflect the views of the
National Science Foundation.

\bibliographystyle{plain}

\clearpage

\begin{table}[]
\centering
\caption{Table comparing goodness-of-fit computations for
software {\tt powerlaw} \cite{Alstott.po14} and {\tt RNApowerlaw} for
homopolymer lengths less than 30 nt. Given homopolymer length $n$,
the connectivity density is computed over all secondary structures
for (all possible) degrees $k=1,\ldots,\frac{(n-3)(n-4)}{2}$ using 
the algorithm described in Section~\ref{section:fasterAlgoHomopolymer}. 
Program {\tt powerlaw} requires an input file containing the degrees of
all structures (i.e. containing $S_n$ values, where $S_n$ is the exponentially
large number of all secondary structures), while our program
{\tt RNApowerlaw} requires as input a list of degrees and their (absolute)
frequencies. Table headers as follows: $n$ is homopolymer length,
$S_n$ is the number of all secondary structures,
$\alpha$ is the maximum likelihood value for the scaling factor of the
optimal power-law fit, as computed by {\tt powerlaw} (PL) and
{\tt RNApowerlaw} (RNAPL), KSdist is the Kolmogorov-Smirnov (KS)
distance using equation~(\ref{eqn:KSdistance}), 
$\langle \mbox{KSdist} \rangle$ is the mean KS-distance obtained by
replacing `max' by `mean' in equation~(\ref{eqn:KSdistance}), $R$ is the 
log-odds ratio with associated $p$-value  as computed by {\tt powerlaw},
and the $p$-value in the last column is computed by {\tt RNApowerlaw}
as described in Section~\ref{section:statisticalMethods}. Since
{\tt powerlaw} required more than 24 hours for the computation when 
$n=28$, we did not attempt a computation for $n=30$; in contrast,
{\tt RNApowerlaw} requires a few seconds computation time.  Since
the log-odds ratio $R$ is the logarithm of the power-law likelihood divided
by lognormal likelihood, a negative value $R<0$ indicates that the lognormal
distribution is a better fit for the tail of RNA
secondary structure connectivity data. A small $p$-value computed by
{\tt RNApowerlaw} indicates that RNA connectivity data is not well-approximated
by a power-law distribution. Nevertheless, we believe that the power-law
paradigm provides some valuable insight, given small mean KS-distance
and the fact that preferential attachment could be shown for the network of
secondary structures -- see Section~\ref{section:preferentialAttachment}.
}
\label{table:pvaluesPowerLawRNApowerlaw}
\medskip
\begin{small}
\resizebox{\linewidth}{!}{%
\begin{tabular}{| rrrrrrrrrrr |}
\hline
$n$	&	$S_n$	&	$k_{\mbox\tiny min}$	&	 $\alpha$ (PL)	&	$\alpha$ (RNAPL)	&	KSdist (PL)	&	 KSdist (RNAPL)	&	$\langle \mbox{KSdist} \rangle$	&	log odds ratio R (PL)	&	p-val for R (PL)	&	p-val (RNAPL)	\\
\hline
10	&	65	&	3	&	3.13752	&	3.13753	&	0.05576	&	0.05576	&	0.02721	&	-0.15	&	0.765	&	0.813	\\
12	&	274	&	4	&	3.23011	&	3.23011	&	0.03650	&	0.03650	&	0.01277	&	-0.81	&	0.482	&	0.746	\\
14	&	1184	&	5	&	3.38933	&	3.38935	&	0.02021	&	0.02021	&	0.00669	&	-1.70	&	0.270	&	0.699	\\
16	&	5223	&	6	&	3.51285	&	3.51289	&	0.02252	&	0.02253	&	0.00603	&	-6.78	&	0.029	&	0.051	\\
18	&	23434	&	9	&	3.79069	&	3.79073	&	0.02333	&	0.02333	&	0.00624	&	-16.00	&	0.001	&	0.001	\\
20	&	106633	&	10	&	3.87168	&	3.87165	&	0.02116	&	0.02116	&	0.00581	&	-82.12	&	0.000	&	0.000	\\
22	&	490999	&	10	&	3.85806	&	3.85809	&	0.02304	&	0.02304	&	0.00523	&	-670.64	&	0.000	&	0.000	\\
24	&	2283701	&	14	&	4.16480	&	4.16477	&	0.02242	&	0.02242	&	0.00484	&	-1452.24	&	0.000	&	0.000	\\
26	&	10713941	&	15	&	4.24485	&	4.24486	&	0.02298	&	0.02298	&	0.00417	&	-7129.42	&	0.000	&	0.000	\\
28	&	50642017	&	16	&	4.33086	&	4.33089	&	0.02167	&	0.02168	&	0.00347	&	-33020.89	&	0.000	&	0.000	\\
30	&	240944076	&	\textemdash	&	\textemdash	&	4.33681	&\textemdash&	0.02393	&	0.00298	&\textemdash	&\textemdash&	0.000	\\
\hline
\end{tabular}}
\end{small}
\end{table}

\begin{table}[]
\centering
\caption{Table showing that approximate [resp. exact] scaling factor
$\alpha_0$ [resp. $\alpha$] and minimum degree $k_{\mbox \tiny min}$
for optimal power-law fit of homopolymer 
connectivity data can not be reliably computed by using
software {\tt powerlaw} \cite{Alstott.po14} on data sampled from
relative frequencies.   Approximate value $\alpha_0$  is computed 
from equation~(\ref{eqn:alpha0Newman}), while $\alpha$ is the maximum
likelihood estimate (MLE) of the optimal power-law scaling factor.
Given homopolymer length $n=20,30,40$,
connectivity density is computed over all secondary structures
for (all possible) degrees $k=1,\ldots,\frac{(n-3)(n-4)}{2}$ using 
the algorithm described in Section~\ref{section:fasterAlgoHomopolymer}. 
Since {\tt powerlaw} 
requires input files of (individually observed) connectivity degrees, 
rather than a histogram of (absolute) frequencies $F(k)$ of connectivity
degrees, we generated a file consisting of  $N \cdot p(k)$ many occurrences
of the value $k$, where $N=10^2,10^3,\ldots,10^7$ denotes the total number
of samples, and where relative frequency $p(k)$ is defined by
$p(k) = F(k)/\sum_{k=1}^{(n-3)(n-4)/2} F(k)$. 
In contrast to {\tt powerlaw}, our program
{\tt RNApowerlaw} (RNAPL) computes {\em exact} values from connectivity
degree (absolute) frequencies.  When using
{\tt powerlaw}, it is clearly necessary to create input files of 
ever-increasing sizes $N$, in order to have more accurate values of
$\alpha_0$, $\alpha$ and $k_{\mbox\tiny min}$. Since the number $S_n$ of
RNA secondary structures is exponential in homopolymer length  $n$,
it rapidly becomes impossible to use {\tt powerlaw} for large RNAs --
for instance, table values for $n=40$ required an overnight run of
{\tt powerlaw}, while
our software returned the exact value within a few seconds.
}
\label{table:dependenceOfPowerLawSoftwareOnSampleSize}
\begin{small}
\resizebox{\linewidth}{!}{%
\begin{tabular}{|l|cccccccr|}
\hline
$N$ &	$10^2$	&	$10^3$	&	$10^4$	&	$10^5$	&	$10^6$	&	$10^7$	&	{\tt RNAPL}	&	$S_n$\\
\hline
$\alpha_0$, $n=20$	&	$6.58318$	&	$3.66505$	&	$3.93389$	&	$3.86017$	&	$3.84749$	&	$3.84657$	&	$3.84648$	&	$106633\approx 1.1 \cdot 10^5$	\\
$k_{\mbox\tiny min}$	&	$10$	&	$7$	&	$10$	&	$10$	&	$10$	&	$10$	&	$10$	&	\textemdash	\\
\hline
$\alpha_0$, $n=30$	&	$5.27581$	&	$4.42183$	&	$4.46307$	&	$4.35008$	&	$4.32651$	&	 $4.32272$	&	$4.32213$	&	$240944076 \approx 2.4 \cdot 10^{8}$	\\
$k_{\mbox\tiny min}$	&	$12$	&	$13$	&	$16$	&	$16$	&	$16$	&	$16$	&	$16$	&	\textemdash	\\
\hline
$\alpha_0$, $n=40$	&	$5.15978$	&	$5.09714$	&	$5.03719$	&	$5.24488$	&	$5.16985$	&	$5.70916$	&	$5.94561$	&	$633180247373\approx 6.3 \cdot 10^{11}$	\\
$k_{\mbox\tiny min}$	&	$15$	&	$19$	&	$23$	&	$29$	&	$29$	&	$42$	&	$49$	&	\textemdash	\\
\hline
\end{tabular}}
\bigskip
\resizebox{\linewidth}{!}{%
\begin{tabular}{|l|cccccccr|}
\hline 
$N$ &	$10^2$	&	$10^3$	&	$10^4$	&	$10^5$	&	$10^6$	&	$10^7$	&	{\tt RNAPL}	&	$S_n$\\
\hline
$\alpha$, $n=20$	&	$6.76575$	&	$3.70988$	&	$3.96139$	&	$3.88570$	&	$3.87271$	&	$3.87180$	&	$3.87165$	&	$106633\approx 1.1 \cdot 10^5$	\\
$k_{\mbox\tiny min}$	&	$10$	&	$7$	&	$10$	&	$10$	&	$10$	&	$10$	&	$10$	&	\textemdash	\\
\hline
$\alpha$, $n=30$	&	$5.33162$	&	$4.44651$	&	$4.47963$	&	$4.36511$	&	$4.34122$	&	$4.33739$	&	$4.33681$	&	$240944076 \approx 2.4 \cdot 10^{8}$	\\
$k_{\mbox\tiny min}$	&	$12$	&	$13$	&	$16$	&	$16$	&	$16$	&	$16$	&	$16$	&	\textemdash	\\
\hline
$\alpha$, $n=40$	&	$5.19197$	&	$5.11604$	&	$5.049419$	&	$5.25365$	&	$5.17824$	&	$5.65206$	&	$5.95033$	&	$633180247373\approx 6.3 \cdot 10^{11}$	\\
$k_{\mbox\tiny min}$	&	$15$	&	$19$	&	$23$	&	$29$	&	$29$	&	$41$	&	$49$	&	\textemdash	\\
\hline
\end{tabular}}
\end{small}
\end{table}

\begin{table}[]
\centering
\caption{Table showing maximum likelihood scaling factors $\alpha$ with
associated $p$ values
for optimal power-law fits of RNA secondary structure connectivity data
for homopolymers of length $n=30$ to $150$. Absolute and relative
connectivity degree frequencies were computed by {\tt RNAdensity} from
Section~\ref{section:fasterAlgoHomopolymer}, while the optimal parameters
$\alpha, k_{\mbox\tiny min}$  and $p$-values were computed by
{\tt RNApowerlaw} from Section~\ref{section:statisticalMethods}.
Column headers are as follows: $n$ is sequence length, $k_{\mbox\tiny max}$
is the degree upper bound $K$ for {\tt RNAdensity}, 
$\%$ of $S_n$ indicates the proportion of all secondary structures having
degree bounded by $K=k_{\mbox\tiny max}$,
$k_{\mbox\tiny peak}$ is the location of the density maximum,
$k_{\mbox\tiny mfe}=\lfloor \frac{n-\theta}{2} \rfloor$ is the degree of
the minimum free energy structure (having largest number of base pairs),
$k_{\mbox\tiny min}$ is the optimal lower bound for a power-law fit,
$\alpha(k_{\mbox\tiny min},k_{\mbox\tiny max})$	is the maximum likelihood
scaling factor for power-law fit, 
$KS(k_{\mbox\tiny min},k_{\mbox\tiny max})$ is the Kolmogorov-Smirnov (KS)
distance between connectivity data and power-law fit,
$p$-val	is goodness-of-fit $p$ value for Kolmogorov-Smirnov statistics, and
$\langle KS \rangle$ is the average KS distance, obtained by replacing
`max' by `mean' in equation~(\ref{eqn:KSdistance}). Although {\tt RNAdensity}
determined absolute and relative degree frequencies for homopolymers of
length 130 and 150, for unexplained reasons the {\tt Scipy.optimize} function
{\tt minimize} did not converge in the 
maximum likelihood computation of $\alpha$.
}
\label{table:outputRNAdegreeUpTo150}
\medskip
\begin{small}
\resizebox{\linewidth}{!}{%
\begin{tabular}{| rrrrrrrrrr |}
\hline
$n$	&	$k_{\mbox\tiny max}$	&	$\%$ of $S_n$	&	$k_{\mbox\tiny peak}$	&	$k_{\mbox\tiny mfe}$	&	$k_{\mbox\tiny min}$	&	$\alpha(k_{\mbox\tiny min},k_{\mbox\tiny max})$	&	$KS(k_{\mbox\tiny min},k_{\mbox\tiny max})$	&	$p$-val	&	$\langle KS \rangle$	\\
\hline
30	&	60	&	0.99886074	&	10	&	13	&	16	&	4.412752307	&	0.025636172	&	0.03149541	&	0.006894691	\\
35	&	70	&	0.99917394	&	12	&	16	&	18	&	4.545722158	&	0.025991642	&	0.029727427	&	0.006009813	\\
40	&	80	&	0.999404339	&	14	&	18	&	23	&	4.897040035	&	0.023835647	&	0.026543112	&	0.005845715	\\
45	&	90	&	0.999562564	&	16	&	21	&	30	&	5.342317642	&	0.021749037	&	0.026034495	&	0.006104977	\\
50	&	100	&	0.9996808	&	18	&	23	&	32	&	5.462330089	&	0.020786348	&	0.02382197	&	0.005145287	\\
55	&	110	&	0.999762012	&	20	&	26	&	39	&	5.848765937	&	0.019749956	&	0.022546932	&	0.00518246	\\
60	&	120	&	0.999823183	&	22	&	28	&	41	&	5.965304744	&	0.018803143	&	0.020882921	&	0.004525872	\\
65	&	130	&	0.999866331	&	24	&	31	&	49	&	6.362319737	&	0.017886705	&	0.020202276	&	0.004522192	\\
70	&	140	&	0.999898961	&	26	&	33	&	52	&	6.521229066	&	0.016897879	&	0.018717457	&	0.004036303	\\
75	&	150	&	0.999923045	&	28	&	36	&	60	&	6.876787811	&	0.016113851	&	0.018129669	&	0.004015537	\\
80	&	160	&	0.999941051	&	31	&	38	&	63	&	7.026510665	&	0.015105392	&	0.016718486	&	0.003597117	\\
85	&	170	&	0.999954575	&	33	&	41	&	67	&	7.212562892	&	0.014349852	&	0.015688377	&	0.003328962	\\
90	&	180	&	0.999964901	&	35	&	43	&	74	&	7.495319334	&	0.013571721	&	0.014903651	&	0.003195372	\\
95	&	190	&	0.999972604	&	37	&	46	&	78	&	7.672099669	&	0.012832491	&	0.013974921	&	0.002961822	\\
100	&	200	&	0.999978707	&	40	&	48	&	83	&	7.876228775	&	0.012134176	&	0.01318324	&	0.002776086	\\
105	&	135	&	0.999388278	&	42	&	51	&	67	&	7.559405648	&	0.023127812	&	0.027682433	&	0.007817632	\\
110	&	140	&	0.999432364	&	44	&	53	&	70	&	7.705752635	&	0.022696603	&	0.026966274	&	0.007443879	\\
115	&	145	&	0.999473643	&	46	&	56	&	73	&	7.850242149	&	0.022277404	&	0.059607021	&	0.031881135	\\
120	&	150	&	0.999512397	&	49	&	58	&	77	&	8.052936897	&	0.021847326	&	0.025865936	&	0.007113075	\\
125	&	155	&	0.999548701	&	51	&	61	&	80	&	8.193141238	&	0.021417417	&	0.02520985	&	0.006770253	\\
130	&	160	&	0.999582464	&	53	&	63	&	84	&	8.389838968	&	0.020977798	&	0.024789337	&	0.006763371	\\
135	&	165	&	0.999613747	&	55	&	66	&	88	&	8.583283462	&	0.020543854	&	0.024364073	&	0.006753744	\\
140	&	170	&	0.99964276	&	58	&	70	&	\textendash	&	\textendash	&	\textendash	&	\textendash	&	\textendash	\\
145	&	175	&	0.999669723	&	60	&	71	&	94	&	8.851385266	&	0.019680276	&	0.023075596	&	0.00609451	\\
150	&	180	&	0.999694756	&	62	&	75	&	\textendash	&	\textendash	&	\textendash	&	\textendash	&	\textendash	\\
\hline
\end{tabular} }
\end{small}
\end{table}

\begin{table}[]
\centering
\caption{Table showing secondary structure
preferential attachment probabilities.
The first two columns contain homopolymer length $n$ and
$n+1$, followed by the number of secondary structures in ${S}_n$ and
${S}_{n+1}$, then the total number of 4-tuples $(s,t,s',t')$ that
succeed in demonstrating [resp. fail to demonstrate] preferential
attachment, denoted by {\sc Succ} [resp. {\sc Fail}]. The next column
contains the proportion  {\sc Succ}/({\sc Succ}+{\sc Fail}) of
4-tuples that demonstrate preferential attachment, defined by
equation~(\ref{eqn:proportionSuccesses}), while the last column
contains the expected preferential attachment 
$\langle p(s',t'|s)\rangle$, defined by equation~(\ref{eqn:condProb_st}).
This expectation is obtained by computing the arithmetical average of the
conditional probabilities $p(s',t'|s,t)$, defined by
$p(s',t'|s,t) = P\left( dg(s') \geq dg(t') | dg(s) \geq dg(t), s \prec s', t \prec t' \right)$.
}
\label{table:preferentialAttachment}
\medskip
\begin{small}
\begin{tabular}{| rrrrrrrr |}
\hline
n	&	n+1	&	$S_{n}$	&	$S_{n+1}$	&	{\sc Succ}	&	{\sc Fail}&	{\sc Succ}/({\sc Succ}+{\sc Fail}	&	$\langle p(s',t'|s,t) \rangle$	\\
\hline
5	&	6	&	2	&	4	&	5	&	1	&	83.33\%	&	$0.8333 \pm 0.1667$	\\
6	&	7	&	4	&	8	&	18	&	8	&	69.23\%	&	$0.7222 \pm 0.4157$	\\
7	&	8	&	8	&	16	&	90	&	37	&	70.87\%	&	$0.7748 \pm 0.3260$	\\
8	&	9	&	16	&	32	&	419	&	131	&	76.18\%	&	$0.8105 \pm 0.2941$	\\
9	&	10	&	32	&	65	&	1,891	&	575	&	76.68\%	&	$0.8122 \pm 0.2887$	\\
10	&	11	&	65	&	133	&	7,883	&	2,498	&	75.94\%	&	$0.8125 \pm 0.2891$	\\
11	&	12	&	133	&	274	&	33,069	&	9,763	&	77.21\%	&	$0.8300 \pm 0.2730$	\\
12	&	13	&	274	&	568	&	142,968	&	40,797	&	77.80\%	&	$0.8322 \pm 0.2709$	\\
13	&	14	&	568	&	1,184	&	621,884	&	171,384	&	78.40\%	&	$0.8366 \pm 0.2646$	\\
14	&	15	&	1,184	&	2,481	&	2,723,993	&	723,887	&	79.00\%	&	$0.8428 \pm 0.2587$	\\
15	&	16	&	2,481	&	5,223	&	12,041,929	&	3,108,978	&	79.48\%	&	$0.8478 \pm 0.2556$	\\
16	&	17	&	5,223	&	11,042	&	53,730,451	&	13,544,005	&	79.87\%	&	$0.8518 \pm 0.2523$	\\
17	&	18	&	11,042	&	23,434	&	241,738,083	&	59,258,399	&	80.31\%	&	$0.8561  \pm 0.2485$	\\
18	&	19	&	23,434	&	49,908	&	1,096,087,115	&	261,730,198	&	80.72\%	&	$0.8598  \pm 0.2455$		\\
\hline
\end{tabular}
\end{small}
\end{table}


\clearpage

\begin{figure}
\centering
\includegraphics[width=\textwidth]{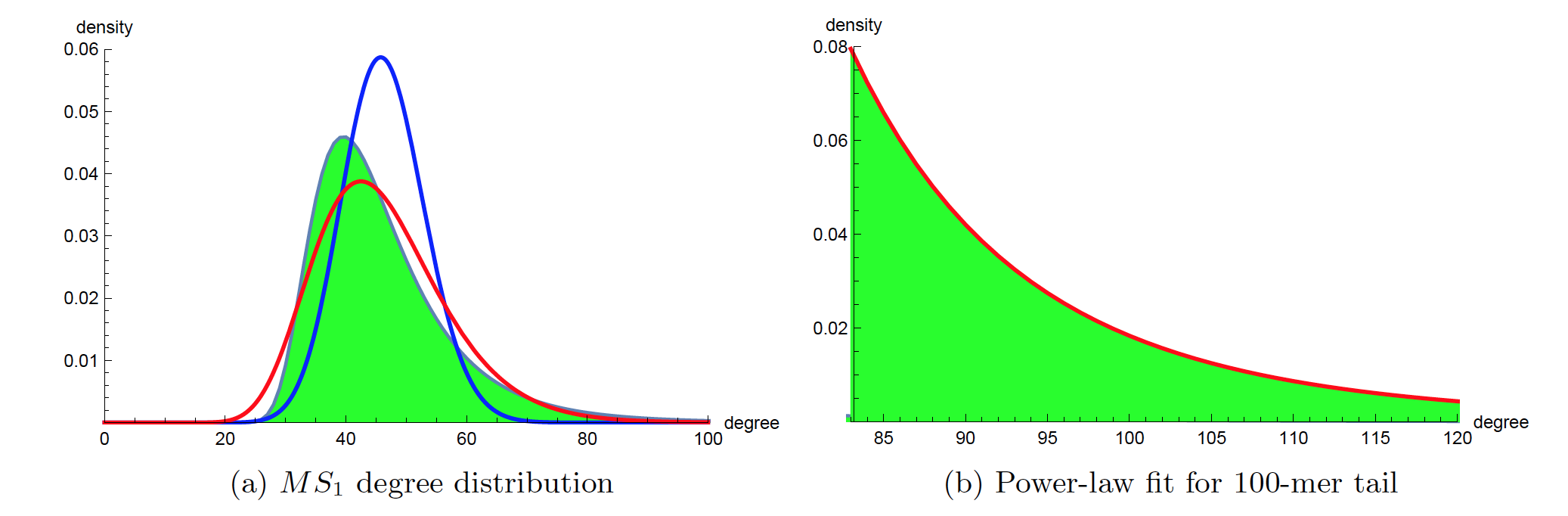}
%
\caption{
(a) Connectivity degree distribution for homopolymer of length $100$ where $\theta=3$, 
computed with the algorithm described in Section~\ref{section:fasterAlgoHomopolymer} for
all degrees bounded by $K = 200$.  There are $6.32 \cdot 10^{32}$ secondary structures
for the 100-mer (exact number $6.31986335936396855341222902079183$), and
$99.9978706904\%$ of the structures have degree bounded by $K$. Using the output
degree densities, the degree mean [standard deviation] is
$\mu=46.2543801196$ [resp. $\sigma=12.2262985078$]; note that the mean computed from 
the algorithm in Section~\ref{section:fasterAlgoHomopolymer} is very close to the
exact degree mean of $\mu=46.2591895818$, computed over all structures using the
different dynamic programming algorithm in \cite{Clote.jcc15}. The 
Poisson distribution (blue curve) with same mean $\mu$ is shown, as well
as the lognormal distribution (red) with parameters $\mu_0=3.80467214577$ and 
$\sigma_0=0.235563374146$; i.e. $\mu_0$ [resp. $\sigma_0$] is the mean
[resp. standard deviation] for logarithms of the connectivity degree -- see
equation~(\ref{eqn:lognormal}). 
(b) Power-law fit of tail
with scaling factor $\alpha=7.8762287746$ and $k_{\mbox\tiny min}=83$,
determined by maximum likelihood. Kolmogorov-Smirnov (KS) distance for the fit is
$0.01213$ -- see  equation~(\ref{eqn:KSdistance}), while average
KS distance for the alpha power-law fit $0.00400$. Nevertheless, since 
the $p$-value $0$ (to 10 decimal places), hypothesis testing would reject the
null hypothesis that the power-law distribution is a good fit for the tail.
}
\label{fig:100mer}
\end{figure}

\begin{figure}
\centering
\includegraphics[width=\textwidth]{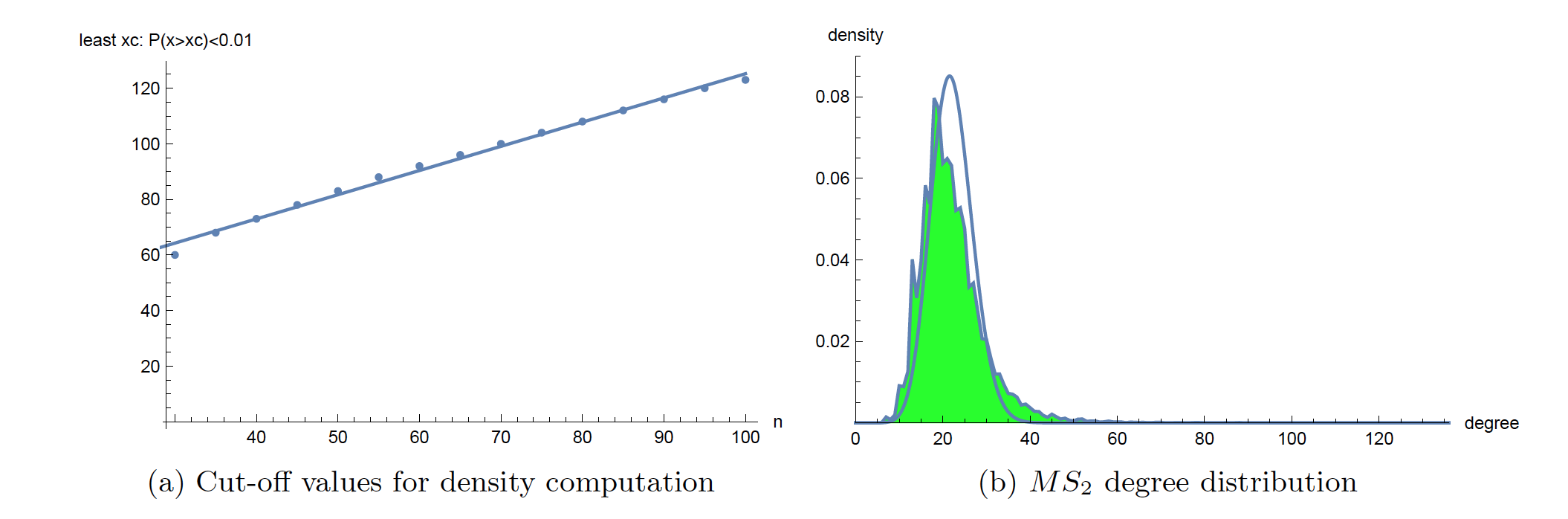}
%
\caption{
(a) Plot of the least cut-off value $x_c$ as a function of homopolymer length
$n$, for $n=30,40,\ldots,100$. 
Here $x_c$ is defined as the least value such that the probability
that a secondary structure for length $n$ homopolymer has degree greater
that $x_c$ is at most $0.01$. For the least-squares fit, the 
regression equation is $y=0.870714 x + 38.1369$, with 
$p$-value of $1.65112 \cdot 10^{-15}$ for slope value, and 
$p$-value of $5.20963 \cdot 10^{-13}$ for the $y$-intercept.
(b) $MS_2$ connectivity for the 106,633 secondary structures for a 
20-nt homopolymer with $\theta=3$ (green shaded curve), with Poisson
distribution of the same mean. Connectivity values range
from $4,\ldots,136$ (with many intermediate gaps before the max degree).
The distribution mean [resp. standard deviation] 
is $\mu=22.0531$ [resp. $\sigma=7.333$]; these values should be contrasted
with the corresponding values of
$\mu'=8.3364$ [resp. $\sigma'=4.7690$] for $MS_1$ connectivity for the 
same 20-nt homopolymer (data not shown).
}
\label{fig:cutoffValueGraph}
\end{figure}

\begin{figure}
\centering
\includegraphics[width=\textwidth]{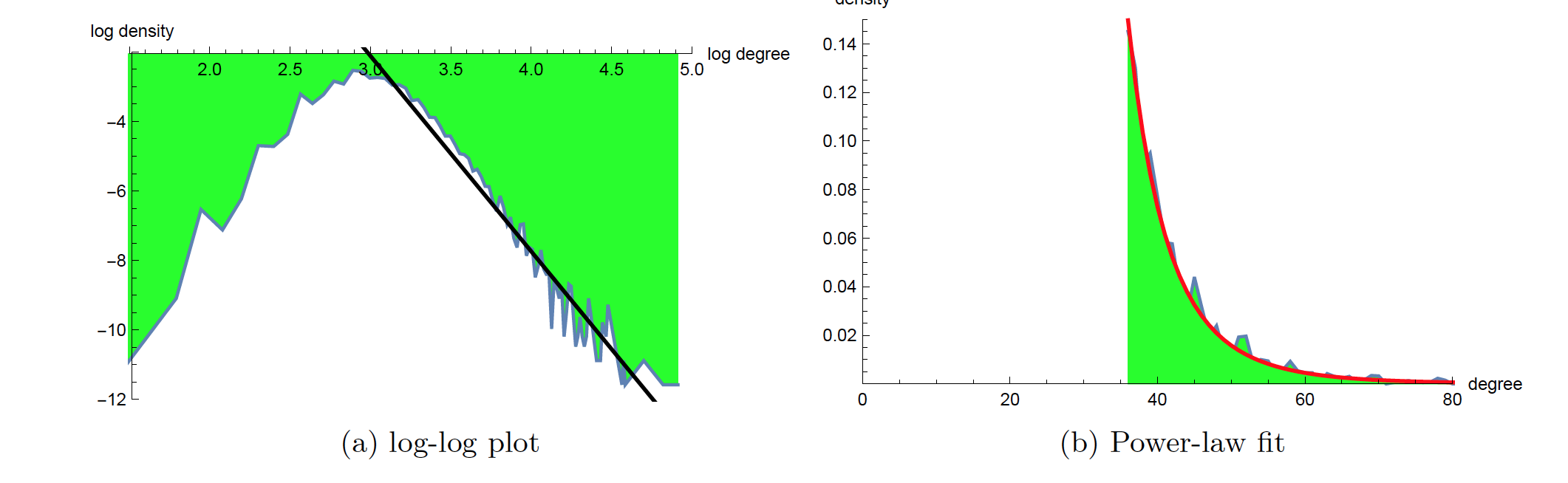}
%
%
\caption{
(a) Plot of $\ln(\mbox{density})$ as a function 
of $\ln(\mbox{degree})$ for the degree distribution for $MS_2$
connectivity of the
20-nt homopolymer with $\theta=3$, for degrees $4,\ldots,136$. 
The distribution tail appears to satisfy
a power-law with exponent $\approx -5.6247$, i.e. 
$p(x) \propto  x^{-5.6247}$,
where $x$ is degree and $p(x)$ is the relative frequency of the number of
nodes having degree $x$ (regression equation log-log plot is
$\ln(p(x)) = 14.7589 - 5.6247 \cdot x$).  
(b) It is well-known that linear regression of the log-log plot is
less reliable than using maximum likelihood when establishing whether the
tail of empirical data is fit by a power-law distribution. For the
$MS_2$ connectivity data of a 20-nt homopolymer, the maximum likelihood
estimation (MLE) of optimal power-law scaling factor is 
$\alpha=6.8257$ with $p$-value is $0.219$ when $k_{\mbox\tiny min}=36$ and
$k_{\mbox\tiny max}=136$. Since the $p$-value is not less than $0.05$,
we can not reject the null hypothesis that $MS_2$ connectivity is well-fit
by a power-law distribution.
}
\label{fig:powerlaw20ntHomopolymerMS2}
\end{figure}

\clearpage
\appendix

\section{Mathematical validation of preferential attachment}
\label{section:mathematicalValidationPreferentialAttachment}

We now proceed to give a rigorous proof of preferential attachment for the
simpler model of {\em pseudo-secondary} structure, in which pseudoknots are {\em allowed}
and $\theta=0$, so that hairpin loops are permitted that contain no unpaired nucleotides. 
Let $\mathcal{S}^*_n$ denote the set of pseudo-secondary structures for a length $n$
homopolymer. By means of an example, when $n=4$, $\mathcal{S}^*_n$ contains the following
nine structures:
$\bullet \bullet \bullet \bullet$,
$\op \cp \bullet \bullet$,
$\op \bullet \cp  \bullet$,
$\op \bullet \bullet \cp$,
$\bullet \op \bullet \cp$,
$\bullet \bullet \op \cp$,
$\op \cp \op \cp$,
$\op \op \cp \cp$,
$\op \ob \cp \cb$.
Only the last structure contains a pseudoknot, for which a distinct type of bracket
must be used. In general, if $s \in \mathcal{S}^*_n$  contains $k$ base pairs,
then $s$ can given by the extended dot-bracket notation over alphabet
$\bullet,a_1,A_1,\ldots,a_k,A_k$, where
symbol $a_i$ [resp. $A_i$] occurs at position $x$ [resp. $y$] if $(x,y)$ is the
$i$th base pair in the lexicographic ordering of base pairs of $s$, while
$\bullet$ occurs at all remaining positions of $s$. 
Throughout the remainder of this
section, structure will mean pseudo-secondary structure.
The following lemma will be used implicitly throughout the remainder of this section
when doing degree computations.

\begin{lemma}
\label{lemma:computationOfDegree}
For any structure $s \in \mathcal{S}^*_n$, the degree of $s$ satisfies
$dg(s) = |s| + {{n-2|s|} \choose 2}$.
\end{lemma}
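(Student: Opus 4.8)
The plan is to count the $MS_1$-neighbors of a fixed structure $s \in \mathcal{S}^*_n$ directly, by splitting them into those reached by deleting a single base pair and those reached by adding a single base pair. The first step I would take is to record the key simplification afforded by the relaxed model: since pseudoknots are allowed and $\theta = 0$, a pseudo-secondary structure on $[1,n]$ is nothing more than a partial matching on $\{1,\dots,n\}$ — a set of unordered pairs, no two sharing an endpoint, with no nesting restriction and no hairpin-loop-length restriction. Consequently, a pair $(i,j)$ with $i<j$ may be legally added to $s$ if and only if both $i$ and $j$ are unpaired in $s$; this is the only constraint, because neither planarity nor the $\theta$-condition can be violated in $\mathcal{S}^*_n$.

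Next I would enumerate the two kinds of moves. For deletions: for each of the $|s|$ base pairs of $s$, removing it yields a structure in $\mathcal{S}^*_n$ with $|s|-1$ base pairs, and distinct base pairs clearly yield distinct structures, so deletions contribute exactly $|s|$ neighbors. For additions: let $U$ denote the set of positions left unpaired by $s$, so that $|U| = n - 2|s|$; by the observation above, the structures obtained by adding one base pair to $s$ are in bijection with the $2$-element subsets of $U$, since adding $(i,j)$ for $\{i,j\}\subseteq U$ always produces a legal structure and distinct pairs produce distinct structures. Hence additions contribute exactly $\binom{n-2|s|}{2}$ neighbors.

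Finally I would note that these two families of neighbors are disjoint — a structure obtained by deletion has $|s|-1$ base pairs whereas one obtained by addition has $|s|+1$ — so the total degree is the sum $dg(s) = |s| + \binom{n-2|s|}{2}$, as claimed. There is no genuine obstacle in this argument; its entire content lies in the first step, namely recognizing that passing to the relaxed ensemble $\mathcal{S}^*_n$ removes the nesting and $\theta$ constraints, so that the set of addable base pairs depends on $s$ only through its number of unpaired positions. This is precisely what makes $dg$ a clean closed-form function of $|s|$ and $n$, and it is this closed form that the subsequent preferential-attachment computations in this appendix will exploit.
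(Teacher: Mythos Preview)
Your proof is correct and follows essentially the same approach as the paper: count neighbors obtained by removing a base pair ($|s|$ of them) and neighbors obtained by adding a base pair at any two of the $n-2|s|$ unpaired positions ($\binom{n-2|s|}{2}$ of them). Your write-up is more explicit than the paper's---in particular, you spell out why the relaxation to $\mathcal{S}^*_n$ removes all constraints on addable pairs and why the two families of neighbors are disjoint---but the argument is the same.
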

\begin{proof} 
The first term is due to the fact that $|s|$ structural neighbors of
$s$ can be obtained by removal
of a base pair of $s$.  By adding a base pair $(x,y)$ at any
two of the $(n-2|s|)$ unpaired positions in $s$ we also obtain a neighbor of
$s$. As these are the only neighbors of $s$, the lemma follows. \hfill
\end{proof}

\begin{lemma}
\label{lemma:numBasePairsImpliesDegree}
Let $s,t \in \mathcal{S}^*_n$ be two structures of length $n$. 
If $|s| \leq |t|$ then $dg(s) \geq dg(t)$.
\end{lemma}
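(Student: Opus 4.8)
The plan is to reduce the statement to Lemma~\ref{lemma:computationOfDegree}, which already supplies a closed form for the degree in terms of the number of base pairs. Writing $dg(s) = |s| + \binom{n-2|s|}{2}$ and $dg(t) = |t| + \binom{n-2|t|}{2}$, the claim becomes the purely numerical assertion that the one-variable function $f(m) = m + \binom{n-2m}{2}$ is non-increasing on its natural domain. First I would pin down that domain: since $\theta = 0$ and pseudoknots are permitted, any $s \in \mathcal{S}^*_n$ can have at most $\lfloor n/2 \rfloor$ disjoint base pairs, so $0 \le |s| \le \lfloor n/2 \rfloor$, and likewise for $t$. Thus it suffices to show $f(m) \ge f(m+1)$ for every $m$ with $0 \le m \le \lfloor n/2 \rfloor - 1$, after which $|s| \le |t|$ gives $f(|s|) \ge f(|t|)$, i.e. $dg(s) \ge dg(t)$.

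The monotonicity is established by a direct forward-difference computation. Setting $u = n - 2m$, one has $\binom{u}{2} - \binom{u-2}{2} = \tfrac12\bigl(u(u-1) - (u-2)(u-3)\bigr) = 2u - 3$, hence $f(m) - f(m+1) = -1 + (2u - 3) = 2u - 4 = 2(n - 2m - 2)$. For $m$ in the range $0 \le m \le \lfloor n/2 \rfloor - 1$ we have $2m \le n - 2$ — this is the one place where the parity of $n$ must be noted, since for $n$ even $m \le n/2 - 1$ gives $2m \le n - 2$, while for $n$ odd $m \le (n-3)/2$ gives the stronger $2m \le n - 3$ — so $f(m) - f(m+1) = 2(n - 2m - 2) \ge 0$.

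To conclude, I would telescope (equivalently, induct on $|t| - |s|$): from $f(m) \ge f(m+1)$ for all consecutive $m$ in $[0, \lfloor n/2 \rfloor]$ it follows that $f$ is non-increasing on this interval, so $0 \le |s| \le |t| \le \lfloor n/2 \rfloor$ forces $f(|s|) \ge f(|t|)$. Substituting back the expressions from Lemma~\ref{lemma:computationOfDegree} yields $dg(s) \ge dg(t)$, as desired.

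There is no genuine obstacle here; the only subtlety worth flagging explicitly is the boundary case where $t$ attains the maximal $|t| = \lfloor n/2 \rfloor$ and $|s| = \lfloor n/2 \rfloor - 1$: there $f(|s|) - f(|t|)$ equals $0$ when $n$ is even and $2$ when $n$ is odd, so the inequality is non-strict in general — precisely what the lemma asserts — and one must be careful not to overstate it as strict. Keeping the domain of $f$ fixed before invoking monotonicity ensures the telescoping argument never leaves $[0, \lfloor n/2 \rfloor]$.
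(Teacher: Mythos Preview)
Your proof is correct and follows essentially the same approach as the paper: both invoke Lemma~\ref{lemma:computationOfDegree} to reduce to showing that $m \mapsto m + \binom{n-2m}{2}$ is non-increasing on $[0,\lfloor n/2\rfloor]$, establish this via the forward difference (the paper's $k=1$ step gives exactly your $2n-4|t|$), and then extend to general $|t|-|s|$ by induction/telescoping. Your framing via the explicit function $f$ and its forward difference is slightly cleaner than the paper's case-by-case induction, but the substance is identical.
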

\begin{proof}
The proof is now by induction on $|t|-|s|$. In the base case, it
is obvious by the previous lemma that for any $s,t \in \mathcal{S}^*_n$, if $|s|=|t|$ 
then necessarily $dg(s)=dg(t)$. Assume now that $|s| \leq |t|$ and
$|t|-|s|=1$. It follows from the definition of binomial coefficient that
\begin{align*}
{{n-2|t|+2} \choose 2} &= {{n-2|t|+1} \choose 1} + {{n-2|t|+1} \choose 2} \\
&= (n-2|t|+1)  + {{n-2|t|} \choose 1} + {{n-2|t|} \choose 2} \\
&= 2n-4|t|+1 + {{n-2|t|} \choose 2} \\
\end{align*}
We now have 
\begin{align*}
dg(s) &= |s| + {{n-2|s|} \choose 2} = (|t|-1) + {{n-2(|t|-1)} \choose 2}\\
&= (|t|-1) + (2n-4|t|+1) + {{n-2|t|} \choose 2} = 2n-3|t| + {{n-2|t|} \choose 2} \\
dg(t) &= |t| + {{n-2|t|} \choose 2}\\
dg(s)-dg(t) &= 2n-4|t|
\end{align*}
Since $t \in \mathcal{S}^*_n$, clearly $|t| \leq \lfloor n/2 \rfloor$, so
$2n-4|t| \geq 0$, hence $dg(s) \geq dg(t)$. The proof proceeds in a similar 
fashion for larger values of $k=|t|-|s|$ -- in particular, if $|s|=|t|-k$,
then a similar computation shows that
\begin{align*}
dg(s)-dg(t) &= \left( 2kn -4k|t| \right) + \left( \sum\limits_{i=1}^{2k-1} i \right)
- k\\
&= \left( 2kn -4k|t| \right) + 2k(k-1) \geq 2k(k-1)
\end{align*}
The lemma now follows. \hfill
\end{proof}

\begin{corollary}
\label{cor:numBasePairsImpliesDegree}
Let $s,t \in \mathcal{S}^*_n$ be two structures of length $n$. 
Suppose that $|s| < |t|$ and $k=|t|-|s| \geq 1$. Then
$dg(s)>dg(t)$ holds unless $k=1$ and $|t|=n/2$. In the latter case,
$dg(s)=dg(t)$.
\end{corollary}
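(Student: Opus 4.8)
The plan is to observe that the statement is an immediate strengthening of Lemma~\ref{lemma:numBasePairsImpliesDegree}, extracted from the exact difference formulas already computed in its proof. First I would recompute $dg(s)-dg(t)$ cleanly. By Lemma~\ref{lemma:computationOfDegree}, $dg(s)=|s|+\binom{n-2|s|}{2}$ and $dg(t)=|t|+\binom{n-2|t|}{2}$. Writing $|s|=|t|-k$ and setting $m=n-2|t|$, a short telescoping of binomial coefficients gives $\binom{m+2k}{2}-\binom{m}{2}=2km+2k^2-k$, hence
\begin{align*}
dg(s)-dg(t) &= -k + \left(2km+2k^2-k\right) = 2km + 2k(k-1) = 2k(n-2|t|) + 2k(k-1).
\end{align*}
Since $t\in\mathcal{S}^*_n$ has $|t|\le\lfloor n/2\rfloor$, the quantity $m=n-2|t|$ is a nonnegative integer, so $dg(s)-dg(t)\ge 2k(k-1)\ge 0$, which already recovers Lemma~\ref{lemma:numBasePairsImpliesDegree}.

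Next I would split into the two cases distinguished in the corollary. If $k\ge 2$, then $2k(k-1)\ge 4>0$, so $dg(s)-dg(t)\ge 2k(k-1)>0$ and strict inequality holds unconditionally. If $k=1$, the formula collapses to $dg(s)-dg(t)=2(n-2|t|)=2m$, which is $0$ exactly when $n=2|t|$, i.e.\ when $|t|=n/2$ (which in particular forces $n$ even), and is otherwise $\ge 2>0$ because then $m=n-2|t|\ge 1$. Combining the two cases: $dg(s)>dg(t)$ in all situations except $k=1$ together with $|t|=n/2$, in which case $dg(s)=dg(t)$. This is precisely the claim.

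There is essentially no genuine obstacle here; the only points requiring a modicum of care are (i) verifying the general-$k$ difference identity $2k(n-2|t|)+2k(k-1)$, which the proof of Lemma~\ref{lemma:numBasePairsImpliesDegree} asserts via ``a similar computation'' without spelling it out, so I would include the one-line binomial telescoping above to make it self-contained, and (ii) correctly tracking parity in the $k=1$ case, noting that equality $dg(s)=dg(t)$ under $|s|<|t|$ can occur only when $n$ is even and $t$ is a maximally paired (perfect-matching) structure.
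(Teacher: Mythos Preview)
Your proof is correct and follows essentially the same route as the paper: both compute $dg(s)-dg(t)=2k(n-2|t|)+2k(k-1)$ (the paper derives this in the proof of Lemma~\ref{lemma:numBasePairsImpliesDegree} and then quotes it), bound it below by $2k(k-1)$ to handle $k\ge 2$, and specialize to $2(n-2|t|)$ for $k=1$. Your binomial telescoping makes the general-$k$ identity more transparent than the paper's ``similar computation,'' but the argument is otherwise identical.
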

\begin{proof}
By the proof of the preceding lemma, for $k=|t|-|s|$,
we have $dg(s) - dg(t) \geq 2k(k-1)$, so that
$dg(s)>dg(t)$ for any $k\geq 2$.
If $k=1$ then $dg(s)-dg(t)=2n-4|t|$, which is strictly greater than zero,
unless $n$ is even and $|t|=n/2$. The lemma now follows.  \hfill 
\end{proof}

\begin{lemma}
\label{lemma:degreeImpliesNumBasePairs}
Let $s,t \in \mathcal{S}^*_n$ be two structures of length $n$.
If $dg(s) \geq dg(t)$ then either (1)
$|s| \leq |t|$, or (2) $n$ is even, $|s|=\frac{n}{2}$, $|t|=\frac{n}{2}-1$.
\end{lemma}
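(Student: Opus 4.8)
The plan is to obtain this as a logical rearrangement of the two preceding results, Lemma~\ref{lemma:numBasePairsImpliesDegree} and Corollary~\ref{cor:numBasePairsImpliesDegree}, applied with the roles of $s$ and $t$ interchanged. Assume $dg(s)\ge dg(t)$. If $|s|\le|t|$ we are already in alternative~(1), so suppose instead that $|s|>|t|$; it then suffices to show that $n$ is even, $|s|=n/2$, and $|t|=n/2-1$, i.e.\ alternative~(2).

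First I would apply Lemma~\ref{lemma:numBasePairsImpliesDegree} with the two structures swapped: since $|t|\le|s|$, the lemma yields $dg(t)\ge dg(s)$. Combined with the hypothesis $dg(s)\ge dg(t)$, this forces $dg(s)=dg(t)$. Next I would invoke Corollary~\ref{cor:numBasePairsImpliesDegree}, again with $t$ playing the role of the structure having fewer base pairs: with $k=|s|-|t|\ge 1$, the corollary asserts that $dg(t)>dg(s)$ holds unless $k=1$ and $|s|=n/2$. Since we have just established the equality $dg(s)=dg(t)$, the strict inequality cannot hold, so we must be in the exceptional case $k=1$ and $|s|=n/2$. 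Then $|s|=n/2$ being an integer forces $n$ even, and $|t|=|s|-k=n/2-1$, which is precisely alternative~(2).

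I do not expect any real obstacle here: all of the combinatorial content — the closed form $dg(s)=|s|+\binom{n-2|s|}{2}$ and the monotonicity-with-slack estimates $dg(s)-dg(t)\ge 2k(k-1)$ — is already packaged in the cited lemma and corollary, so the present lemma reduces to a short contrapositive argument together with the observation that the borderline equality case is exactly the one the corollary singles out. The one point to state carefully is that the parity condition in~(2) is automatic once $|s|=n/2$ is an integer, so no separate case split on the parity of $n$ is needed.
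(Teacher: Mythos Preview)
Your argument is correct, and it differs from the paper's. The paper proves the lemma \emph{ab initio}: it writes out $dg(s)\ge dg(t)$ using the formula $dg(s)=|s|+\binom{n-2|s|}{2}$, algebraically rearranges, and---assuming $|s|>|t|$---divides through by $2(|s|-|t|)>0$ to obtain the clean inequality $|s|+|t|\ge n-1$; combining this with $|s|,|t|\le\lfloor n/2\rfloor$ forces $|s|=\lfloor n/2\rfloor$, $|t|=\lfloor n/2\rfloor-1$, and a parity check rules out odd $n$.

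Your route instead recycles Lemma~\ref{lemma:numBasePairsImpliesDegree} and Corollary~\ref{cor:numBasePairsImpliesDegree} with the roles of $s$ and $t$ swapped, turning the lemma into a short contrapositive. This is more economical and makes clear that Lemma~\ref{lemma:degreeImpliesNumBasePairs} is essentially a restatement of the earlier monotonicity results rather than new content; the only cost is that the intermediate inequality $|s|+|t|\ge n-1$ is not exposed (though the paper does not use it elsewhere). One minor streamlining: you do not actually need the preliminary appeal to Lemma~\ref{lemma:numBasePairsImpliesDegree} to force $dg(s)=dg(t)$; the swapped Corollary~\ref{cor:numBasePairsImpliesDegree} alone gives $dg(t)>dg(s)$ outside the exceptional case, which already contradicts the hypothesis $dg(s)\ge dg(t)$.
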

\begin{proof}
We begin by a computation.
\begin{align*}
dg(s) \geq dg(t) 
&\Leftrightarrow 
|s|+ { {n-2|s|} \choose 2} \geq |t|+ { {n-2|t|} \choose 2} \\
&\Leftrightarrow 
|s|-|t| \geq  \frac{(n-2|t|)(n-2|t|-1)}{2}- \frac{(n-2|s|)(n-2|s|-1)}{2} \\
&\Leftrightarrow 
2 \cdot ( |s|-|t| ) \geq 
\left( n^2 - 4n|t| + 4|t|^2 -n + 2|t| \right) -
\left( n^2 - 4n|s| + 4|s|^2 -n + 2|s| \right)  \\
&\Leftrightarrow 
2(|s|-|t|) \geq  4(|t|^2-|s|^2) + 4n(|s|-|t|) -2(|s|-|t|) \\
&\Leftrightarrow 
2(|s|-|t|) \geq  -4(|s|-|t|)(|s|+|t|)  +  (|s|-|t|)\left(4n-2 \right)
\end{align*}
If $|s| >|t|$, then by dividing both sides of the last inequality by 
the strictly positive value $2(|s|-|t|)$, we obtain
\begin{align*}
dg(s) \geq dg(t) 
&\Leftrightarrow 
1 \geq -2(|s|+|t|)  + 2n-1
\Leftrightarrow 
2(|s|+|t|) \geq 2n-2
\Leftrightarrow 
|s|+|t| \geq n-1
\end{align*}
Now either $|s| \leq |t|$, which is one of the conclusions of the lemma, or
$|s|>|t|$. In the latter case, then since $|s|,|t| \leq \lfloor \frac{n}{2} \rfloor$,
it must be that $|s|= \lfloor \frac{n}{2} \rfloor$ and
$|t| = \lfloor \frac{n}{2} \rfloor-1$. If $n=2m+1$ is odd, then
$|s|=m$, $|t|=m-1$, so $|s|+|t|=2m-1 < 2m = n-1$. It follows that
$dg(s) \geq dg(t)$ and $|s|>|t|$ can only occur if $n$ is even and
$|s|=\frac{n}{2}$ and $|t|=\frac{n}{2}-1$. This completes the proof of
the lemma. \hfill 
\end{proof}

\begin{corollary}
\label{cor:degreeImpliesNumBasePairs}
Let $s,t \in \mathcal{S}^*_n$ be two structures of length $n$.
If $dg(s) >  dg(t)$ then $|s| < |t|$.
\end{corollary}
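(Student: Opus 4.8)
The plan is to derive the corollary as an immediate consequence of Lemma~\ref{lemma:degreeImpliesNumBasePairs} together with Corollary~\ref{cor:numBasePairsImpliesDegree} and Lemma~\ref{lemma:computationOfDegree}. First I would observe that $dg(s) > dg(t)$ in particular gives $dg(s) \geq dg(t)$, so Lemma~\ref{lemma:degreeImpliesNumBasePairs} applies and places us in one of two cases: either (1) $|s| \leq |t|$, or (2) $n$ is even with $|s| = \tfrac{n}{2}$ and $|t| = \tfrac{n}{2} - 1$.

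Next I would eliminate case (2). There we have $|t| < |s|$ with $|s| - |t| = 1$ and $|s| = \tfrac{n}{2}$, which is precisely the exceptional configuration of Corollary~\ref{cor:numBasePairsImpliesDegree} (applied with the roles of $s$ and $t$ interchanged); that corollary then forces $dg(t) = dg(s)$, contradicting the strict inequality $dg(s) > dg(t)$. Hence case (2) cannot occur, and we must be in case (1), i.e. $|s| \leq |t|$.

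Finally I would upgrade the non-strict inequality to a strict one. If $|s| = |t|$, then by the degree formula of Lemma~\ref{lemma:computationOfDegree} (equivalently, the base case of Lemma~\ref{lemma:numBasePairsImpliesDegree}) we would have $dg(s) = |s| + \binom{n-2|s|}{2} = |t| + \binom{n-2|t|}{2} = dg(t)$, again contradicting $dg(s) > dg(t)$. Therefore $|s| < |t|$, as claimed.

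I do not expect any real obstacle here: the argument is a short case split, and the only point requiring a little care is the even-$n$ boundary configuration, which is exactly what Corollary~\ref{cor:numBasePairsImpliesDegree} was designed to handle.
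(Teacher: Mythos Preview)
Your argument is correct, but it takes a longer path than the paper. The paper's proof is a single contrapositive: assume $|s|\geq |t|$; then Lemma~\ref{lemma:numBasePairsImpliesDegree} (applied with the roles of $s$ and $t$ swapped) gives $dg(s)\leq dg(t)$, contradicting the hypothesis. That is all that is needed, because Lemma~\ref{lemma:numBasePairsImpliesDegree} already handles both the equality case and the even-$n$ boundary case implicitly.

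By contrast, you route through Lemma~\ref{lemma:degreeImpliesNumBasePairs} to obtain a case split, then invoke Corollary~\ref{cor:numBasePairsImpliesDegree} to dispose of the exceptional case, and finally appeal to Lemma~\ref{lemma:computationOfDegree} to sharpen $\leq$ to $<$. Each step is valid, and the argument has the virtue of making explicit exactly where the even-$n$ boundary and the equality case are ruled out. But Lemma~\ref{lemma:degreeImpliesNumBasePairs} and Corollary~\ref{cor:numBasePairsImpliesDegree} are themselves derived from Lemma~\ref{lemma:numBasePairsImpliesDegree}, so you are in effect unpacking and then repacking the same inequality. The paper's direct contrapositive is both shorter and avoids any case analysis.
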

\begin{proof}
Assume that $|s|\geq |t|$. Then by
Lemma~\ref{lemma:numBasePairsImpliesDegree}
$|s| \geq |t|$ implies that $dg(s) \leq dg(t)$, which contradicts
the hypothesis of the lemma. It follows that $|s|>|t|$.  \hfill 
\end{proof}

\begin{lemma}
\label{lemma:combinatoricsForFail}
If $n \geq 2$ is an even integer, then
\begin{align}
\label{eqn:FailnComputation}
{\mbox\sc Fail}_n &= 
 \frac{n!}{(n/2)! \cdot 2^{n/2}} \cdot 
\frac{(n-2)!}{((n-2)/2)! \cdot 2^{(n-2)/2}} \cdot
\frac{n(n-1)}{2}  + \\
&\sum\limits_{k=1}^{n/2-1} 
\Big[ {n \choose {2k}} \cdot 
\left( \frac{(2k)!}{k! \cdot 2^{k}} \right) \Big] \cdot
\Big[ {n \choose {2k}} \cdot 
\left( \frac{(2k)!}{k! \cdot 2^{k}} \right) -1 \Big]
\cdot 2k
\end{align}
\end{lemma}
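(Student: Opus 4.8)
The plan is to count the failing $4$-tuples in $\mathcal{A}_n$ directly, organised by the number of base pairs, using the base-pair/degree dictionary of the preceding lemmas. By Lemma~\ref{lemma:computationOfDegree} a structure's degree depends only on its number of base pairs. The first step is a clean reformulation: a $4$-tuple $(s,t,s',t')\in\mathcal{A}_n$ fails \emph{iff} $dg(s)\ge dg(t)$ and $|s'|>|t'|$. The implication $dg(s')<dg(t')\Rightarrow|s'|>|t'|$ is the length-$(n{+}1)$ instance of Corollary~\ref{cor:degreeImpliesNumBasePairs}, and the converse is the length-$(n{+}1)$ instance of Corollary~\ref{cor:numBasePairsImpliesDegree}; for the converse it is essential that $n+1$ is \emph{odd}, so that the exceptional equality ``$|t'|=(n{+}1)/2$'' there cannot occur and a strict base-pair inequality forces a strict degree inequality. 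This is precisely where the hypothesis that $n$ is even enters.

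Next I would split the pairs $(s,t)$ with $dg(s)\ge dg(t)$ according to Lemma~\ref{lemma:degreeImpliesNumBasePairs}: either $|s|\le|t|$, or $n$ is even with $|s|=n/2$ and $|t|=n/2-1$. If $|s|<|t|$ there are no failures, because any extensions satisfy $|s'|\le|s|+1\le|t|\le|t'|$, hence $dg(s')\ge dg(t')$ by Lemma~\ref{lemma:numBasePairsImpliesDegree}. If $|s|=|t|=k$, then, since an extension changes the base-pair count by $0$ or $1$, a failure forces $|s'|=k+1$ and $|t'|=k$, i.e.\ $t'=t\bullet$ while $s'$ is obtained from $s$ by adjoining a base pair between $n+1$ and one of the $n-2k$ positions of $s$ available for pairing; thus each ordered pair of distinct $k$-pair structures contributes exactly $n-2k$ failures (and only $1\le k\le n/2-1$ can contribute, since $k=0$ gives no distinct pair and $k=n/2$ gives $n-2k=0$). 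Finally, if $|s|=n/2$ and $|t|=n/2-1$, then $s$ has no free position and its unique extension is $s\bullet$ with $|s'|=n/2$, so a failure forces $|t'|=n/2-1$, i.e.\ $t'=t\bullet$, and each such ordered pair contributes exactly one failure.

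To finish I would count structures by base-pair number. Since pseudoknots are allowed and $\theta=0$, the pseudo-secondary structures of length $n$ with exactly $k$ base pairs correspond bijectively to a choice of $2k$ paired positions together with a perfect matching of them, so there are $M_k:=\binom{n}{2k}\cdot\frac{(2k)!}{k!\,2^k}$ of them, and hence $M_k(M_k-1)$ ordered pairs of distinct ones. Summing $M_k(M_k-1)(n-2k)$ over $k=1,\dots,n/2-1$ and adding the contribution $M_{n/2}\cdot M_{n/2-1}$ of the last case, then substituting $M_{n/2}=\frac{n!}{(n/2)!\,2^{n/2}}$ and $M_{n/2-1}=\binom{n}{2}\cdot\frac{(n-2)!}{((n-2)/2)!\,2^{(n-2)/2}}$ with $\binom{n}{2}=\frac{n(n-1)}{2}$, yields the stated identity.

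The arithmetic — evaluating $M_k$ and rewriting the two distinguished factors $M_{n/2}$ and $M_{n/2-1}$ — is routine. The part I expect to be the main obstacle is the bookkeeping that makes the case analysis provably exhaustive and disjoint: one must be confident that every failing configuration is covered and counted exactly once, which hinges on combining the ``at most one new base pair per extension'' fact with the degree$\leftrightarrow$base-pair equivalences of Lemmas~\ref{lemma:numBasePairsImpliesDegree}--\ref{lemma:degreeImpliesNumBasePairs}, and on the parity observation (no degree ties between structures of different base-pair counts at the odd length $n+1$) that lets one replace the strict degree comparison $dg(s')<dg(t')$ by the strict count comparison $|s'|>|t'|$.
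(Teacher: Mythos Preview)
Your approach is the same as the paper's: translate the degree comparisons into base-pair-count comparisons via Lemmas~\ref{lemma:numBasePairsImpliesDegree}--\ref{lemma:degreeImpliesNumBasePairs} and their corollaries, then do a finite case analysis on how the extensions change the base-pair counts. The paper indexes its four cases by $(|s'|-|s|,\,|t'|-|t|)\in\{0,1\}^2$, whereas you index by the trichotomy $|s|<|t|$, $|s|=|t|$, and the exceptional pair $(|s|,|t|)=(n/2,\,n/2-1)$; these partitions collapse to the same two nonempty contributions (your equal-count case is the paper's Case~3, and your saturated case is the paper's Case~1). Your explicit use of the parity of $n+1$ to turn $dg(s')<dg(t')$ into $|s'|>|t'|$ is exactly the mechanism the paper leans on, stated more crisply.

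There is one substantive discrepancy. In the equal-count case $|s|=|t|=k$ you (correctly) observe that $s$ has $n-2k$ unpaired positions, so there are $n-2k$ extensions $s'$ with $|s'|=k+1$ and hence each ordered pair contributes $n-2k$ failures; your summand therefore carries the factor $n-2k$. The lemma as printed carries the factor $2k$, and the paper's own Case~3 is internally inconsistent on precisely this point: it fixes ``$k=|s|$ base pairs'' and counts $M_k$ accordingly, but then writes ``$s$ having $k$ unpaired positions'' and enumerates extensions $s'_0,\dots,s'_{2k}$ as if $s$ had $2k$ unpaired positions. Your bookkeeping is the right one; what you actually prove is the identity with $(n-2k)$ in place of $2k$ in the summand, which is what the lemma should state. (For $n=4$ the two coincide since $2k=n-2k$ at $k=1$, so small checks do not separate them.) Everything else---the exhaustiveness of the case split, the count $M_{n/2}\cdot M_{n/2-1}$ for the saturated case, and the perfect-matching count $M_k=\binom{n}{2k}\,(2k)!/(k!\,2^{k})$---matches the paper and is correct.
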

\begin{proof}
Recall that {\sc Fail}$_n$ consists of all 4-tuples $(s,t,s',t')$ such that
$s,t$ are distinct structures in $\mathcal{S}^*_n$, with $dg(s) \geq dg(t)$,
and that $s',t' \in \mathcal{S}^*_{n+1}$ are extensions $s \prec s'$, $t \prec t'$,
but that $dg(s')<dg(t')$. If $s \prec s'$, then either $s'$ is obtained by adding
an unpaired nucleotide at the end of $s$, in which case $|s'|=|s|$, or $s'$ is
obtained by adding a base pair $(k,n+1)$ to $s$, for some $k \in [1,n]$ external to
every base pair of $s$, in which case $|s'|=|s|+1$. It follows that for each
4-tuple $(s,t,s',t')$ in {\sc Fail}$_n$, one of the following cases occurs.
\medskip

\noindent
{\sc Case 1:} $|s|=|s'|$, $|t'|=|t|$.
\hfill\break
Since $(s,t,s',t') \in {\mbox{\sc Fail}}_n$, $dg(s')<dg(t')$, hence by
Corollary \ref{cor:degreeImpliesNumBasePairs}, $|s'|>|t'|$. Since
$|s|=|s'|$ and $|t|=|t'|$, it follows that $|s|>|t|$.
Corollary \ref{cor:numBasePairsImpliesDegree} then implies that 
if $k=|s|-|t|>1$ or $k=1$ and $2|s|<n$, then $dg(s)<dg(t)$, 
a contradiction of the hypothesis that
$dg(s)\geq dg(t)$. It follows that $2|s| \geq n$, and since 
$|s| \leq \lfloor n/2 \rfloor$ and $n$ is even,
it must be that $|s|=n/2$. Now $dg(s)=|s|+ {{n-2|s|} \choose 2} = |s|$.
The only manner in which $dg(s) \geq dg(t)$ is if $|t|=n/2-1$, in which
case $dg(s) = n/2 = dg(t)$. Let
$f_n(1)$ denote the number of 4-tuples in {\sc Fail}$_n$ that satisfy
the hypothesis of the current case. Then
\begin{align}
\label{eqn:fn1}
f_n(1) &= \frac{n!}{(n/2)! \cdot 2^{n/2}} \cdot 
\frac{(n-2)!}{((n-2)/2)! \cdot 2^{(n-2)/2}} \cdot
\frac{n(n-1)}{2}  
\end{align}
Indeed, we claim that the number of $s$ with $|s|=n/2$ for $n$ even is
$\frac{n!}{(n/2)! \cdot 2^{n/2}}$, since 
there are $\frac{n!}{(n/2)! \cdot 2^{n/2}}$ many ways of distributing $n/2$ parentheses:
${n \choose 2}$ choices of the first parenthesis, ${{n-2} \choose 2}$
choices for location of the second parenthesis, etc. However, the parentheses
symbols are indistinguishable, so we then divide by $(n/2)!$. Since $|t|=n/2-1$,
there are $n \choose 2$ choices for where to insert the two unpaired positions;
having fixed the unpaired positions, there are 
$\frac{(n-2)!}{(n/2-1)! \cdot 2^{n/2-1}}$ many ways of filling the remaining
$n-2$ positions with parentheses, accounting for the fact that the parenthesis
symbols are indistinuishable.

For each such $s$, the only extension of $s$ is $s'=s\bullet$; for this $s'$,
$dg(s')=|s'|=|s|=n/2$. 
For each such $t$, there are exactly three possible extensions:
$t'_1=t\bullet$, $t'_2=t \cup \{(x,n+1)\}$, 
$t'_3=t \cup \{(y,n+1)\}$, where positions $x,y$ are unpaired in $t$.
However, only $t'_1$ satisfies $|t'|=t$. Moreover, since $t'_1$ has
three unpaired positions, $dg(t'_1)=|t'_1|+{3 \choose 2}=|t|+3=|s|+3$,
and so $dg(s')<dg(t'_1)$.  It follows that $(s,t,s',t') \not\in \mbox{sc Fail}_n$,
thus justifying equation~(\ref{eqn:fn1}).
\medskip


\noindent
{\sc Case 2:} $|s'|=|s|$, $|t'|=|t|+1$.
\hfill\break
Since $(s,t,s',t') \in {\mbox{\sc Fail}}_n$, $dg(s')<dg(t')$, hence by
Corollary \ref{cor:degreeImpliesNumBasePairs}, $|s'|>|t'|$.  Since
$|s'|=|s|$, and $|t'|=|t|+1$, it follows that $|s|>|t|+1$, hence
$|s|-|t| \geq 2$.
Corollary~\ref{cor:numBasePairsImpliesDegree} now implies that
$dg(s)<dg(t)$, contradicting the hypothesis that $dg(s) \geq dg(t)$.
Consequently, Case 2 contributes no 4-tuple to {\sc Fail}$_n$; however,
{\sc Succ}$_n$ contains all 4-tuples $(s,t,s',t')$ that satisfy t
$dg(s)\geq dg(t)$ as well as the current case assumptions
$|s'|=|s|$, $|t'|=|t|+1$. In particular this includes all 4-tuples for
which $|s|<|t|$, $|s'|=|s|$, and $|t'|=|t|+1$. 
\medskip

\noindent
{\sc Case 3:} $|s'|=|s|+1$, $|t'|=|t|$.
\hfill\break
Note first that since $|s'|=|s|+1$, 
the extension $s'$ is obtained by adding a base pair of
the form $(k,n+1)$ to $s$, where $k \in [1,n]$ is external to all
base pairs of $s$. Now $n$ is even, so it must be that $|s|<n/2$.
Since $(s,t,s',t') \in {\mbox{\sc Fail}}_n$, $dg(s')<dg(t')$, hence by
Corollary \ref{cor:degreeImpliesNumBasePairs}, $|s'|>|t'|$. 
Now $|s'|=|s|+1$, $|t'|=|t|$, 
so it follows that $|s|+1>|t|$, hence $|s| \geq |t|$. 
By hypothesis of the current lemma, $dg(s)\geq dg(t)$, so by
Lemma \ref{lemma:degreeImpliesNumBasePairs}, either $|s| \leq |t|$,
or $n$ is even and $|s|=n/2$, $|t|=n/2-1$. However, we have already
established that $|s|<n/2$, so it must be that $|s|\leq |t|$.
It follows that $|s|=|t|$. 

Since $s,t$ are assumed to be distinct and
$|s|=0=|t|$ implies that both $s,t$ are the empty structure, 
we must have $1 \leq |s|=|t|$. We have already established that
$|s|<n/2$, so if $f_n(3)$ denotes the number of 4-tuples in {\sc Fail}$_n$ 
that satisfy the hypothesis of Case 3, we have
\begin{align}
\label{eqn:fn3}
f_n(3) &=  |\{ (s,t,s',t') : 1 \leq |s|=|t| \leq \frac{n-2}{2}, s\ne t,
dg(s) \geq dg(t), s \prec s', t \prec t', dg(s')<dg(t') \}|\\
\nonumber
&=\sum\limits_{k=1}^{n/2-1} 
\Big[ {n \choose {2k}} \cdot 
\left( \frac{(2k)!}{k! \cdot 2^{k}} \right) \Big] \cdot
\Big[ {n \choose {2k}} \cdot 
\left( \frac{(2k)!}{k! \cdot 2^{k}} \right) -1 \Big]
\cdot 2k
\end{align}
Indeed, for fixed $k$, since $n$ is even, there are
${n \choose {2k}} \cdot \frac{(2k)!}{k! \cdot 2^{k}}$ many choices of structure $s$ 
having $k=|s|$ base pairs. This holds because
there are ${n \choose {2k}}$ ways of choosing $2k$ positions
that will be occupied by $k$ parenthesis symbols. Having selected these
$2k$ positions among positions $[1,n]$, there are
${{2k} \choose 2}$ ways of choosing where to place the first parenthesis pair,
then ${ {2k-2} \choose 2}$  ways of choosing where to place the second parenthesis pair,
etc. and finally, we divide by $k!$ since the parenthesis symbols are indistinuishable.

Since $|t|=|s|$ and $s \ne t$, once $s$ is selected, there is one fewer possibilities
for choice of $t$, hence the number of choices for $t$ is
$\left( {n \choose {2k}} \cdot \frac{(2k)!}{k! \cdot 2^{k}}-1 \right)$.
For fixed $s$ having $k$ unpaired positions, there are $k+1$ possible extensions
$s \prec s'$, and similarly for $t$. Enumerate the extensions of $s$ as
$s'_0,s'_1,\ldots,s'_{2k}$, where $s'_0=s\bullet$, while $s'_1,\ldots,s'_{2k}$
constitute the positions that are paired; similarly enumerate the extensions
of $t$ as $t'_0,t'_1,\ldots,t'_{2k}$.  Now $dg(s)=dg(t)$, since $|s|=|t|$, and
$dg(s')<dg(t')$ holds if and only if $s' \in \{s'_1,\ldots,s_{2k}\}$ and $t'=t'_0$.
For all such choices of $s',t'$ we have $|s'|=|s|+1$ and $t'=|t|$, so the case
hypothesis is satisfied.  This justifies equation~(\ref{eqn:fn3}). 

Since $dg(s') \geq dg(t')$ if and only if $s'=s'_0=s\bullet$, or if
$s' \in \{s'_1,\ldots,s_{2k}\}$ and $t' \in \{t'_1,\ldots,_{2k}\}$, and
for all such choices of $s',t'$ it is not the case that
$|s'|=|s|+1$, $|t'|=|t|$, it follows that there are no 4-tuples satisfying
the current case hypothesis that belong to {\sc Succ}$_n$. 
\medskip

\noindent
{\sc Case 4:} $|s'|=|s|+1$, $|t'|=|t|+1$.
\hfill\break
As in previous cases, $|s'|>|t'|$. Since $|s'|=|s|+1$, $|t'|=|t|+1$, 
it follows that $|s|>|t|$.
Now $|s| \leq \lfloor n/2 \rfloor$, and $n$ is even, so either $2|s|=n$
or $2|s|<n$. If $2|s|=n$, then there are no unpaired positions in $s$,
hence the only extension of $s$ is $s'=s\bullet$, where $|s'|=|s|$.
This is not possible under the hypothesis of the current case. Thus it must
be that $2|s|<n$, hence by Corollary
\ref{cor:numBasePairsImpliesDegree}, $dg(s)<dg(t)$. But this contradicts the
hypothesis that $dg(s) \geq dg(t)$. Subsequently, Case 4 contributes no
4-tuple to {\sc Fail}$_n$.

In contrast, {\em all} 4-tuples $(s,t,s',t')$ that satisfy the hypothesis of
the current case belong to {\sc Succ}$_n$; in particular, if $0\leq |s|<|t|<n/2$
and $|s'|=|s|+1$, $|t'|=|t|+1$, we have $dg(s) \geq dg(t)$ and $dg(s')\geq dg(t')$.

In summary, we have established that
\begin{align*}
{\mbox\sc Fail}_n &= f_n(1)+f_n(3)\\
f_n(1) &= \frac{n!}{(n/2)! \cdot 2^{n/2}} \cdot 
\frac{(n-2)!}{((n-2)/2)! \cdot 2^{(n-2)/2}} \cdot
\frac{n(n-1)}{2}\\
f_n(3) &=\sum\limits_{k=1}^{n/2-1} 
\Big[ {n \choose {2k}} \cdot 
\left( \frac{(2k)!}{k! \cdot 2^{k}} \right) \Big] \cdot
\Big[ {n \choose {2k}} \cdot 
\left( \frac{(2k)!}{k! \cdot 2^{k}} \right) -1 \Big]
\cdot 2k
\end{align*}
This concludes the proof of the lemma.  \hfill
\end{proof}

\begin{lemma}
\label{lemma:combinatoricsForSucc}
If $n \geq 2$ is an even integer, then
\begin{align}
\label{eqn:SuccnComputation}
{\mbox\sc Succ}_n &\geq 
\sum\limits_{k=0}^{n/2} \quad \sum\limits_{\ell=k}^{n/2} 
\left[ {n \choose {2k}} \cdot \frac{(2k)!}{k! \cdot 2^k} \right] \cdot
\left[ {n \choose {2\ell}} \cdot 
\frac{(2\ell)!}{\ell! \cdot 2^{\ell}} 
-1 \right] + \\
\nonumber
&\sum\limits_{k=0}^{n/2-2} \qquad \sum\limits_{\ell=k}^{n/2-1} 
\left[ {n \choose {2k}} \cdot \frac{(2k)!}{k! \cdot 2^k} \right] \cdot
\left[ {n \choose {2\ell}} \cdot 
\frac{(2\ell)!}{\ell! \cdot 2^{\ell}} 
-1 \right] \cdot (n-2\ell) + \\
\nonumber
&\sum\limits_{k=0}^{n/2-1} \quad \sum\limits_{\ell=k+1}^{n/2} 
\left[ {n \choose {2k}} \cdot \frac{(2k)!}{k! \cdot 2^k} \right] \cdot
\left[ {n \choose {2\ell}} \cdot \frac{(2\ell)!}{\ell! \cdot 2^{\ell}} 
\right] \cdot (n-2k) + \\
\nonumber
&\sum\limits_{k=0}^{n/2-1} \quad \sum\limits_{\ell=k}^{n/2-1} 
\left[ {n \choose {2k}} \cdot \frac{(2k)!}{k! \cdot 2^k} \right] \cdot
\left[ {n \choose {2\ell}} \cdot \frac{(2\ell)!}{\ell! \cdot 2^{\ell}} 
-1 \right] \cdot (n-2k)(n-2\ell)
\end{align}
\end{lemma}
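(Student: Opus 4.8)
The plan is to prove the inequality by exhibiting four pairwise disjoint families of $4$-tuples, each visibly contained in $\mbox{\sc Succ}_n$, whose sizes are exactly the four summands of~(\ref{eqn:SuccnComputation}). The key structural observation is that for $s\in\mathcal{S}^*_n$ and an extension $s\prec s'$, either $s'=s\bullet$ (so $|s'|=|s|$) or $s'$ is obtained by pairing $n+1$ with an unpaired position of $s$ (so $|s'|=|s|+1$), and likewise for $t\prec t'$. Hence every $4$-tuple in $\mathcal{A}_n$ is classified by the pair $(|s'|-|s|,\,|t'|-|t|)\in\{(0,0),(0,1),(1,0),(1,1)\}$, and $4$-tuples of different type are automatically distinct; the four summands will correspond, in this order, to sub-collections of these four types.

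Two ingredients drive the argument. First, from the proof of Lemma~\ref{lemma:combinatoricsForFail}: a structure in $\mathcal{S}^*_m$ with exactly $j$ base pairs exists iff $0\le j\le\lfloor m/2\rfloor$, there are $\binom{m}{2j}\cdot\frac{(2j)!}{j!\,2^{j}}$ of them (choose the $2j$ paired positions, then one of the $(2j-1)!!=\frac{(2j)!}{j!\,2^{j}}$ perfect matchings on them), and each such structure has $m-2j$ unpaired positions, hence exactly one extension with the same number of base pairs and exactly $m-2j$ extensions with one more base pair. Second, Lemma~\ref{lemma:numBasePairsImpliesDegree} says that among structures of a common length, fewer base pairs never decreases degree; applied on length $n$ it turns $|s|\le|t|$ into $dg(s)\ge dg(t)$, and applied on length $n+1$ it turns $|s'|\le|t'|$ into $dg(s')\ge dg(t')$. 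So I only need, for each type, a supply of pairs $(s,t)$ with $s\ne t$ and $|s|\le|t|$ together with extensions of the prescribed type keeping $|s'|\le|t'|$; every resulting $4$-tuple then automatically lies in $\mathcal{A}_n$ and in $\mbox{\sc Succ}_n$.

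Reading off the families: for type $(0,0)$ take $s$ with $k$ base pairs, $t\ne s$ with $\ell\ge k$ base pairs, $s'=s\bullet$, $t'=t\bullet$ (then $|s'|=k\le\ell=|t'|$); for $(0,1)$ take $\ell\le n/2-1$ so $t$ has an unpaired position, $s'=s\bullet$, and $t'$ one of the $n-2\ell$ base-pair extensions of $t$ (then $|s'|=k\le\ell+1=|t'|$); for $(1,0)$ take $k\le n/2-1$, $\ell\ge k+1$ (so automatically $t\ne s$), $s'$ one of the $n-2k$ base-pair extensions of $s$, $t'=t\bullet$ (then $|s'|=k+1\le\ell=|t'|$); for $(1,1)$ take $k,\ell\le n/2-1$, $s'$ and $t'$ both base-pair extensions (then $|s'|=k+1\le\ell+1=|t'|$). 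Counting each family with $\binom{n}{2k}\frac{(2k)!}{k!\,2^k}$ choices for $s$, $\binom{n}{2\ell}\frac{(2\ell)!}{\ell!\,2^{\ell}}-1$ choices for $t$ once $s$ is fixed (exact when $k=\ell$, a harmless under-count when $k<\ell$, and the $-1$ simply dropped in the $(1,0)$ family since $\ell\ge k+1$ already forces $t\ne s$), and the extension multiplicities $1$ or $n-2k,\,n-2\ell$ from the second ingredient, reproduces the four displayed sums. Since the families are disjoint their sizes add, and since the deliberately conservative index ranges (e.g. the second sum stops at $k=n/2-2$) discard a few boundary $4$-tuples, the result is the inequality $\mbox{\sc Succ}_n\ge(\text{the sum})$, as claimed.

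The content here is bookkeeping rather than any real difficulty; the one thing to watch is disjointness and membership — confirming that inside every family $s\ne t$ and $|s'|\le|t'|$ genuinely hold, that the ``$-1$'' is attached to the $t$-factor and only claimed exact on the diagonal $k=\ell$, and that the multiplicities $n-2j$ are evaluated with $j=|s|$ versus $j=|t|$ in the correct places. Because only a lower bound is needed, I need not determine which borderline $4$-tuples are additionally in $\mbox{\sc Succ}_n$, nor invoke the exceptional case of Lemma~\ref{lemma:degreeImpliesNumBasePairs}.
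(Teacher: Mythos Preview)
Your proposal is correct and follows essentially the same route as the paper: both partition potential $4$-tuples by the type $(|s'|-|s|,|t'|-|t|)\in\{0,1\}^2$, use Lemma~\ref{lemma:numBasePairsImpliesDegree} on lengths $n$ and $n+1$ to certify membership in $\mbox{\sc Succ}_n$ via $|s|\le|t|$ and $|s'|\le|t'|$, and count each family with the factor $\binom{n}{2j}\frac{(2j)!}{j!\,2^{j}}$ together with the extension multiplicities $1$ or $n-2j$. Your remarks about the $-1$ being an under-count off the diagonal and about the conservative upper limit $k\le n/2-2$ in the second sum match the paper's treatment exactly.
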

\begin{proof}
Recall that {\sc Fail}$_n$ consists of all 4-tuples $(s,t,s',t')$ such that
$s,t$ are distinct structures in $\mathcal{S}^*_n$, with $dg(s) \geq dg(t)$,
and that $s',t' \in \mathcal{S}^*_{n+1}$ are extensions $s \prec s'$, 
$t \prec t'$, for which $dg(s')\geq dg(t')$. 
As in the previous lemma, we consider each of the following four cases.
\medskip

\noindent
{\sc Case 1:} $|s|=|s'|$, $|t'|=|t|$.
\hfill\break
By Lemma~\ref{lemma:numBasePairsImpliesDegree}, if $|s|\leq |t|$
then $dg(s) \geq dg(t)$; moreover, for extensions $s'=s\bullet$ and
$t'=t\bullet$ we have $|s'|=|s|\leq |t|=|t'|$, so $dg(s')\geq dg(t')$.
This justifies the following computation.

\begin{align}
\label{eqn:sn1}
s_n(1) &= \sum\limits_{k=0}^{n/2} \qquad \sum\limits_{\ell=k}^{n/2} 
\left[ {n \choose {2k}} \cdot \frac{(2k)!}{k! \cdot 2^k} \right] \cdot
\left[ {n \choose {2\ell}} \cdot 
\frac{(2\ell)!}{\ell! \cdot 2^{\ell}} 
-1 \right] 
\end{align}
\medskip

\noindent
{\sc Case 2:} $|s'|=|s|$, $|t'|=|t|+1$.
\hfill\break
In the proof of the previous lemma, it was mentioned that under current
case conditions, there are no 4-tuples that belong to
{\sc Fail}$_n$. 
By Lemma~\ref{lemma:numBasePairsImpliesDegree}, if
$0 \leq |s|\leq |t|<n/2$ we have $dg(s)\geq dg(t)$, hence all such 4-tuples
that satisfy current case conditions belong to {\sc Succ}$_n$. 
Noting that there are $(n-2|t|)$ extensions $t'$ obtained by adding
a base pair $(x,n+1)$ to $t$, where $x$ is unpaired in $t$, we obtain
$s_n(2)$ such 4-tuples, where
\begin{align}
\label{eqn:sn2}
s_n(2) &=  \sum\limits_{k=0}^{n/2-2} \qquad \sum\limits_{\ell=k}^{n/2-1} 
\left[ {n \choose {2k}} \cdot \frac{(2k)!}{k! \cdot 2^k} \right] \cdot
\left[ {n \choose {2\ell}} \cdot 
\frac{(2\ell)!}{\ell! \cdot 2^{\ell}} 
-1 \right] \cdot (n-2\ell)
\end{align}
Here we note that the occurrence of $-1$ in 
$\left[ {n \choose {2\ell}} \cdot 
\frac{(2\ell)!}{\ell! \cdot 2^{\ell}} -1 \right]$
is due to the requirement that $s\ne t$.
\medskip

\noindent
{\sc Case 3:} $|s'|=|s|+1$, $|t'|=|t|$.
\hfill\break
For any $0 \leq |s|<|t|<n/2$ Corollary \ref{cor:numBasePairsImpliesDegree} 
implies that $dg(s) > dg(t)$. As well, there are $(n-2|s|)$ many extensions
$s'$ of $s$ obtained by adding a base pair of the form $(x,n+1)$ to $s$,
where $x$ is unpaired in $s$. For each such extension $s'$ and for the
extension $t'=t\bullet$, since
$|s'|\leq |t'|$ we also have $dg(s')\geq dg(t')$. Thus
\begin{align}
\label{eqn:sn3}
s_n(3) &=\sum\limits_{k=0}^{n/2-1} \quad \sum\limits_{\ell=k+1}^{n/2} 
\left[ {n \choose {2k}} \cdot \frac{(2k)!}{k! \cdot 2^k} \right] \cdot
\left[ {n \choose {2\ell}} \cdot \frac{(2\ell)!}{\ell! \cdot 2^{\ell}} 
\right] \cdot (n-2k)
\end{align}
\medskip

\noindent
{\sc Case 4:} $|s'|=|s|+1$, $|t'|=|t|+1$.
\hfill\break
By Lemma~\ref{lemma:numBasePairsImpliesDegree}, if $|s|\leq |t|$ then
$dg(s) \geq dg(t)$. It follows that for any distinct $s,t$ satisfying
$|s| \leq |t|$, for all $n-2|s|$ extensions $s'$ obtained by
adding a base pair of the form $(x,n+1)$ to $s$ where $x$ is unpaired in
$s$, and for all $n-2|t|$ extensions $t'$ obtained by
adding a base pair of the form $(y,n+1)$ to $t$ where $y$ is unpaired in
$t$, we have $dg(s) \geq dg(t)$ and $dg(s') \geq dg(t')$.
Thus
\begin{align}
\label{eqn:sn4}
s_n(4) &=\sum\limits_{k=0}^{n/2-1} \quad \sum\limits_{\ell=k}^{n/2-1} 
\left[ {n \choose {2k}} \cdot \frac{(2k)!}{k! \cdot 2^k} \right] \cdot
\left[ {n \choose {2\ell}} \cdot \frac{(2\ell)!}{\ell! \cdot 2^{\ell}} 
-1 \right] \cdot (n-2k)(n-2\ell)
\end{align}
Note that $-1$ in the expression 
$\left[ {n \choose {2\ell}} \cdot \frac{(2\ell)!}{\ell! \cdot 2^{\ell}} 
-1 \right]$ is due to the requirement that $s \ne t$.

In summary, since we have established that {\sc Succ}$_n$ contains
at least contributions $s_n(1)+s_n(2)+s_n(3)+s_n(4)$, we have
\begin{align*}
{\mbox\sc Succ}_n &\geq s_n(1)+s_n(2)+s_n(3)+s_n(4)\\
s_n(1) &=  \sum\limits_{k=0}^{n/2} \qquad \sum\limits_{\ell=k}^{n/2} 
\left[ {n \choose {2k}} \cdot \frac{(2k)!}{k! \cdot 2^k} \right] \cdot
\left[ {n \choose {2\ell}} \cdot 
\frac{(2\ell)!}{\ell! \cdot 2^{\ell}} 
-1 \right] \\
s_n(2) &=  \sum\limits_{k=0}^{n/2-2} \qquad \sum\limits_{\ell=k}^{n/2-1} 
\left[ {n \choose {2k}} \cdot \frac{(2k)!}{k! \cdot 2^k} \right] \cdot
\left[ {n \choose {2\ell}} \cdot 
\frac{(2\ell)!}{\ell! \cdot 2^{\ell}} 
-1 \right] \cdot (n-2\ell) \\
s_n(3) &=\sum\limits_{k=0}^{n/2-1} \quad \sum\limits_{\ell=k+1}^{n/2} 
\left[ {n \choose {2k}} \cdot \frac{(2k)!}{k! \cdot 2^k} \right] \cdot
\left[ {n \choose {2\ell}} \cdot \frac{(2\ell)!}{\ell! \cdot 2^{\ell}} 
\right] \cdot (n-2k)\\
s_n(4) &=\sum\limits_{k=0}^{n/2-1} \quad \sum\limits_{\ell=k}^{n/2-1} 
\left[ {n \choose {2k}} \cdot \frac{(2k)!}{k! \cdot 2^k} \right] \cdot
\left[ {n \choose {2\ell}} \cdot \frac{(2\ell)!}{\ell! \cdot 2^{\ell}} 
-1 \right] \cdot (n-2k)(n-2\ell)
\end{align*}
This concludes the proof of the lemma.  \hfill
\end{proof}

The computation of $\mbox{\sc Succ}_n$ and
$\mbox{\sc Fail}_n$ for odd integer $n$ is slightly different,
but similar to that of the previous two lemmas. 
Lemmas \ref{lemma:combinatoricsForFail} and
\ref{lemma:combinatoricsForSucc} clearly establish the following
theorem for even $n$, and similar arguments establish the same for
odd $n$.

\begin{theorem}
\label{thm:theoreticalPreferentialAttachment}
For each $n$, $\mbox{\sc Succ}_n/
(\mbox{\sc Succ}_n+ \mbox{\sc Fail}_n) \gg 1/2$.
\end{theorem}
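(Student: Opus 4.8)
The plan is to combine the exact evaluation of $\mbox{\sc Fail}_n$ in Lemma~\ref{lemma:combinatoricsForFail} with the lower bound for $\mbox{\sc Succ}_n$ in Lemma~\ref{lemma:combinatoricsForSucc} and show that the Succ bound overwhelms the Fail count, so that $P(\mbox{\sc Succ}_n)=\mbox{\sc Succ}_n/(\mbox{\sc Succ}_n+\mbox{\sc Fail}_n)=1-\mbox{\sc Fail}_n/(\mbox{\sc Succ}_n+\mbox{\sc Fail}_n)$ is forced well above $1/2$. I carry this out for even $n$, where the two lemmas apply verbatim; the odd case follows by re-running their four-case analysis with $\lfloor n/2\rfloor$ in place of $n/2$, which changes only the bookkeeping at the top degree and not the shape of the estimates. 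Abbreviate $M(2k)={n\choose 2k}\frac{(2k)!}{k!\,2^k}$ for the number of length-$n$ pseudo-secondary structures with exactly $k$ base pairs, so that $|\mathcal{S}^*_n|=\sum_k M(2k)$ and such a structure has $n-2k$ unpaired positions, hence $n-2k+1$ extensions; in this notation the Fail terms of Lemma~\ref{lemma:combinatoricsForFail} are $f_n(1)=M(n)\,M(n-2)$ and the ``diagonal'' sum $f_n(3)=\sum_k M(2k)(M(2k)-1)\,w_k$ with weight $w_k\le n$, while Lemma~\ref{lemma:combinatoricsForSucc} gives $\mbox{\sc Succ}_n\ge s_n(1)+s_n(2)+s_n(3)+s_n(4)$ with $s_n(1)=\sum_{k\le\ell}M(2k)(M(2\ell)-1)$ and $s_n(4)=\sum_{k\le\ell}M(2k)(M(2\ell)-1)(n-2k)(n-2\ell)$.

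The heart of the argument is to compare $\mbox{\sc Fail}_n$ against $s_n(1)+s_n(4)$ alone. For $f_n(1)$: the double-factorial identity $M(n-2)=\frac n2 M(n)$ (the same manipulation used inside the proof of Lemma~\ref{lemma:combinatoricsForFail}) gives $M(n-2)^2=\frac n2 f_n(1)$, so the single summand $k=\ell=n/2-1$ of $s_n(1)$, namely $M(n-2)(M(n-2)-1)$, already exceeds $(\frac n2-1)f_n(1)$ for $n\ge 6$; thus $f_n(1)$ contributes nothing to the ratio and in fact $f_n(1)$ is exponentially smaller than $|\mathcal{S}^*_n|^2$. For $f_n(3)$: set $a_k=M(2k)(n-2k)$ and keep the off-diagonal part of $s_n(4)$, so $s_n(4)\ge\sum_{k\le\ell}a_k a_\ell - (\text{lower order})\ge\frac12 T_n^2-o(T_n^2)$, where $T_n=\sum_k M(2k)(n-2k)=\sum_{s\in\mathcal{S}^*_n}(\#\text{ unpaired positions of }s)$ equals $|\mathcal{S}^*_n|$ times the expected number of fixed points of a uniform random involution on $[n]$, a quantity of order $\sqrt n$. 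On the other side $f_n(3)\le n\sum_k M(2k)^2\le n\,(\max_k M(2k))\,|\mathcal{S}^*_n|$, and $\max_k M(2k)=o(|\mathcal{S}^*_n|)$ because the number of transpositions of a random involution concentrates on a window whose width grows with $n$; hence $\mbox{\sc Fail}_n=f_n(1)+f_n(3)=o(T_n^2)=o(\mbox{\sc Succ}_n)$ and $P(\mbox{\sc Succ}_n)=1-o(1)\gg 1/2$. (If one wants only the weaker literal statement, and the Case~3 weight is $w_k=n-2k$, then keeping just the $\ell=k$ terms of $s_n(4)$ and using $n-2k\ge 2$ throughout gives $s_n(4)\ge 2f_n(3)$, whence $\mbox{\sc Succ}_n\ge 2\,\mbox{\sc Fail}_n$ and $P(\mbox{\sc Succ}_n)\ge 2/3$ with no probabilistic input.)

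The main obstacle is the term $f_n(3)$, the count of ``equal number of base pairs'' failures: it is built from exactly the same products $M(2k)\cdot M(2k)$ that make up a genuine chunk of $\mbox{\sc Succ}_n$ (the diagonal of $s_n(4)$), so no bald term-by-term inequality separates them — near the mode of $k\mapsto M(2k)$ the available weight is only of order $\sqrt n$, not $n$. The separation has to be manufactured from the off-diagonal mass of $s_n(4)$, and quantifying that mass, together with showing $\sum_k M(2k)^2=o(|\mathcal{S}^*_n|^2/\sqrt n)$, reduces to elementary but genuine facts about the distribution of the number of fixed points (equivalently transpositions) of a uniform random involution on $[n]$: its mean is $\Theta(\sqrt n)$ and its variance tends to infinity. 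Everything else — the closed forms $f_n(1),f_n(3),s_n(1),\dots,s_n(4)$, the double-factorial identity, and the pointwise bound $w_k\le n$ — is routine, and the odd-$n$ case is handled by the same template once Lemmas~\ref{lemma:combinatoricsForFail} and~\ref{lemma:combinatoricsForSucc} are re-derived with the floor functions in place.
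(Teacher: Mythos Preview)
Your proposal is correct and follows the same framework as the paper --- invoke Lemmas~\ref{lemma:combinatoricsForFail} and~\ref{lemma:combinatoricsForSucc} and compare --- but you go considerably further than the paper itself does. The paper's own proof is a two-sentence hand-wave: it simply asserts that the two lemmas make $\mbox{\sc Succ}_n\gg\mbox{\sc Fail}_n$ evident and declines to carry out any estimate (``we do not carry out the computation using Stirling's factorial approximation, etc.''). Your argument actually supplies the missing quantitative comparison: the double-factorial identity $M(n-2)=\tfrac{n}{2}M(n)$ to dispose of $f_n(1)$ against a single term of $s_n(1)$, and then either the direct diagonal bound $s_n(4)\ge 2f_n(3)$ (when the Case~3 weight is $n-2k$) or the involution-distribution argument (when it is $2k$). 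Both routes are sound; the first is sharper and entirely elementary, and is in fact the correct one, since the paper's $2k$ in equation~(\ref{eqn:fn3}) appears to be a slip --- a structure with $k$ base pairs has $n-2k$ unpaired positions, hence $n-2k$ extensions that add a base pair, so the failure count per ordered pair $(s,t)$ with $|s|=|t|=k$ is $n-2k$. Your hedging between the two weights was prudent, and with the correct weight your parenthetical argument already gives $P(\mbox{\sc Succ}_n)\ge 2/3$ outright.
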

\begin{proof}
We do not carry out the computation using Stirling's factorial approximation,
etc.  since we believe that
little is to be gained by the explict value of this proportion; however,
it suffices to note that the previous two lemmas establish that
$\mbox{\sc Succ}_n \gg \mbox{\sc Fail}_n$.
\end{proof}

\end{document}